\newcolumntype{P}[1]{>{\centering\arraybackslash}m{#1}}
\newtheorem{thm}{Theorem}
\newtheorem{defn}{Definition}
\newtheorem{lemma}{Lemma}
\newtheorem{conj}{Conjecture}
\theoremstyle{remark}
\newtheorem*{remark}{Remark}
\newcommand{\comment}[1]{}
\newcommand{\imi}{\mathfrak{i}}
\newcommand{\cc}{\mathcal{C}}
\newcommand{\cd}{\mathcal{D}}
\newcommand{\ch}{\mathcal{H}}
\newcommand{\cu}{\mathcal{U}}
\newcommand{\C}{\mathbb{C}}
\newcommand{\F}{\mathbb{F}}
\newcommand{\Z}{\mathbb{Z}}
\newcommand{\I}{\mathbb{I}}
\newcommand{\R}{\mathbb{R}}
\newcommand{\crr}{\mathcal{R}}
\newcommand{\semc}{\mathcal{SC}}
\newcommand{\He}{\mathsf{H}}
\definecolor{ngreen}{rgb}{0.2,0.7,0.2}
\definecolor{nred}{rgb}{0.9,0.1,0}
\definecolor{nblue}{rgb}{0.1,0.2,0.8}
\begin{document}

\title{Efficient quantum gate teleportation in higher dimensions}
\author{Nadish de Silva}
\email{nadish.desilva@utoronto.ca}
\affiliation{Centre for Quantum Information and Foundations, \\Department of Applied Mathematics and Theoretical Physics \\University of Cambridge,  Cambridge, UK}

\begin{abstract} % abstract
The Clifford hierarchy is a nested sequence of sets of quantum gates  critical to achieving fault-tolerant quantum computation.  Diagonal gates of the Clifford hierarchy and `nearly diagonal' semi-Clifford gates are particularly important: they admit efficient gate teleportation protocols that implement these gates with fewer ancillary quantum resources such as magic states.  Despite the practical importance of these sets of gates, many questions about their structure remain open; this is especially true in the higher-dimensional qudit setting.

Our contribution is to leverage the discrete Stone-von Neumann theorem and the symplectic formalism of qudit stabiliser mechanics towards extending results of Zeng-Cheng-Chuang (2008) and Beigi-Shor (2010) to higher dimensions in a uniform manner.  We further give a simple algorithm for recursively enumerating all gates of the Clifford hierarchy, a simple algorithm for recognising and diagonalising semi-Clifford gates, and a concise proof of the classification of the diagonal Clifford hierarchy gates due to Cui-Gottesman-Krishna (2016) for the single-qudit case.

We generalise the efficient gate teleportation protocols of semi-Clifford gates to the qudit setting and prove that every third level gate of one qudit (of any prime dimension) and of two qutrits can be implemented efficiently.  Numerical evidence gathered via the aforementioned algorithms supports the conjecture that higher-level gates can be implemented efficiently.
\end{abstract}

\maketitle

\section{Introduction}

Quantum computers hold great promise as tools for solving problems beyond the capabilities of existing classical devices.  However, their realisation in practice requires surmounting the challenges posed by the need for fine control over quantum systems.  Protection of highly sensitive quantum data against errors induced by environmental noise motivates the field of \emph{fault-tolerant quantum computation}.

Quantum error correction is complicated by the fact that the no-cloning theorem forbids the copying of quantum data and thereby forbids naive redundancy-based schemes.  Most common schemes for fault-tolerant quantum computation are based on stabiliser codes \citep{gottthesis}: generalisations of classical error-correcting codes that employ Pauli spin operations (Definition \ref{paulidefn}) and the related set of Clifford gates (Definition \ref{cliffdefn}) to protect the integrity of quantum data during computations.  The set of Clifford gates, however, do not form a sufficiently rich set of gates to perform an arbitrary quantum computation.  To achieve quantum universality, these schemes must be supplemented with the ability to perform non-Clifford gates.  

Gottesman-Chuang \cite{gottesman1999} introduced the technique of \emph{quantum gate teleportation}, a variation on standard quantum teleportation, that enables performing certain non-Clifford gates on an input state once given access to an appropriate \emph{magic state}.  Magic states are quantum resources that can be prepared in advance of a computation.  Gottesman-Chuang also introduced the \emph{Clifford hierarchy}: sets of gates that admit fault-tolerant gate teleportation protocols.  For a fixed number $n$ of qubits, the Clifford hierarchy (Definition \ref{ch}) forms a nested sequence $\cc_k^n$ of sets of $n$-qubit gates; the positive integers $k$ denotes the \emph{level} of the Clifford hierarchy.  The Pauli and Clifford gates form the first and second levels respectively.  Gates of the third level can be implemented with magic states of $2n$ qubits.  Gates of higher levels can be implemented via a recursive procedure.

With the goal of reducing the resource overhead costs of fault-tolerant quantum computation, Zhou-Leung-Chuang \cite{zhou2000methodology} introduced the one-bit teleportation protocol that enables the implementation of certain Clifford hierarchy gates, e.g. diagonal Clifford hierarchy gates, using half the ancillary resources required in the original protocol.  The class of Clifford hierarchy gates admitting these efficient teleportation protocols was expanded by Zeng-Chen-Chuang \citep{zeng2008semi} to include those which are \emph{semi-Clifford gates} (Definition \ref{sc}).  That is, those which are, in a sense, diagonalisable by Clifford gates.  In a typical quantum computation, these savings are multiplied by the many times such a gate is required.

The importance of the families of Clifford and semi-Clifford gates motivated the study of their structure.  A question of particular interest is: for which pairs $(n,k)$ are all $n$-qubit, $k$-th level gates semi-Clifford?.

\begin{itemize}\itemsep0em
\item In the case of one- or two-qubit gates, all gates of the Clifford hierarchy are semi-Clifford \citep{zeng2008semi}.
\item In the case of  three-qubit gates, all gates of the third level of the Clifford hierarchy are semi-Clifford \citep{zeng2008semi}.
\item In the case of $n>2, k > 3$, there exist $n$-qubit gates that are in the $k$-th level of the Clifford hierarchy but are not semi-Clifford \citep{zeng2008semi}.
\item In the case of $n>3, k = 3$, there exist $n$-qubit gates that are in the $k$-th level of the Clifford hierarchy but are not semi-Clifford (Gottesman-Mochon, 2009).
\end{itemize}

In this work, we will extend the study of semi-Clifford gates and their efficient gate teleportation protocols to the higher-dimensional qudit setting by raising the question: for which triples $(d,n,k)$, where $d$ is a prime dimension, are all $n$-qudit, $k$-th level gates semi-Clifford?.  Cui-Gottesman-Krishna \cite{cui2017diagonal} recently characterised the diagonal gates of the qudit Clifford hierarchy.  No prior work exists on qudit semi-Clifford gates, to the best of our knowledge.  We further give an algorithm for constructing \emph{all} gates of the Clifford hierarchy.

The magic state distillation procedure for generating magic states is more efficient for qudits than in the qubit case \cite{quditmsd}.  It is not unreasonable that practically realisable quantum computers will one day be based on qudits rather than qubits.  In any case, the mathematical and algorithmic techniques, as well as conjectures we propose with supporting numerical evidence, should benefit the wider project of elucidating the complete structure of the Clifford hierarchy and the semi-Clifford gates.

The unifying theme of the results below is the application of the discrete Stone-von Neumann theorem towards studying the Clifford hierarchy.  The Stone-von Neumann theorem \cite{neumann1931eindeutigkeit,stone1930linear}  asserts the essential equivalence of all representations of the fundamental quantum commutation relations and was originally motivated by the problem of unifying the matrix and wave mechanics pictures of early quantum theory \citep{rosenberg2004}.  This generalises a technique implicitly employed by Beigi-Shor \citep{beigi2009c3} in studying case of qubit third level gates.  We argue that this perspective is useful in understanding all levels of the Clifford hierarchy and in higher dimensions.

\subsection{Summary of main results}

\begin{itemize}\itemsep0em 
\item A streamlined proof of the discrete Stone-von Neumann theorem (Lemma \ref{uniqueunitarytuple}) and a compact expression for the unitary that carries the basic Pauli gates to prescribed, admissible targets: any unitary representation of the Heisenberg group (Theorem \ref{explicitunitary}).  

\item A simple algorithm for recursively enumerating all gates of the Clifford hierarchy (Algorithm 1).

\item A simplified statement and elementary proof of Cui-Gottesman-Krishna's classification of diagonal gates of the Clifford hierarchy (Theorem \ref{cgk}) in the single-qudit case.

\item A generalisation of the efficient gate teleportation protocol of Zhou-Leung-Chuang for qubit semi-Clifford gates to the qudit case (Section \ref{quditteleport}).

\item A novel strengthening of a characterisation of semi-Clifford gates (Theorem \ref{semicliff}) and an algorithm for recognising and diagonalising semi-Clifford gates (Algorithm 2).

\item A proof that all third level gates of one-qudit (of any prime dimension) or two-qutrits are semi-Clifford (Theorem \ref{3rdlvlisSC}) and numerical evidence suggesting that this extends to other $(d,n,k)$.

\end{itemize}

\section{Mathematical background}

\subsection{Notation}

We shall denote the imaginary unit by $\imi$ to distinguish it from our use of $i$ as an indexing variable.

Suppose $d$ is prime, $n \geq 1$, and let $\omega = e^{\imi \, 2 \pi  / d}$.  The set $\{1, ..., n\}$ is denoted by $[n]$.  For $\hat{z} \in \Z_d^n$, the ket $\ket{\hat{z}} = \ket{z_n...z_1}$.

\begin{defn}
For any $n \geq 1$ and function $f: \Z^n_d \to \C$, the diagonal matrix $D[f] \in M_{d^n}(\C)$ is defined by $$D[f] \ket{\hat{z}} = f(z) \ket{\hat{z}}.$$
\end{defn}

Identity matrices are denoted by $\I$; its dimension is given by context. The set of $d^n \times d^n$ unitary complex matrices is denoted by $\cu(d^n)$.  Given $n$ unitaries $U_1, ..., U_n$ and a vector $\hat{p}$ of $n$ integers, we denote by $U^{\hat{p}}$ the product $U_1^{p_1} \cdots U_n^{p_n}$.

\subsection{The Heisenberg group and the Stone-von Neumann theorem}

The Heisenberg \emph{canonical commutation relations} are the mathematical branch point at which quantum theory diverges from classical theory.   They lead to the idea that observable quantities are no longer modelled by scalar-valued functions on phase space but by noncommuting linear operators.  Ignoring constants, with $P$ representing momentum and $Q$ representing displacement of a system with one continuous degree of freedom:  $$[P,Q] = \imi.$$

\noindent Systems with multiple degrees of freedom are represented by tuples $\{(P_i,Q_i)\}_{i \in [n]}$ satisfying $[P_i,Q_i] = \imi$ and, for $i \neq j$, $[P_i,Q_j] = [P_i,P_j] = [Q_i,Q_j] = 0$.

As noted by Weyl, this equation has no solutions with $P$ or $Q$ a bounded operator.  To sidestep this technical issue, he introduced his exponentiated form.  Let $U(s) = e^{\imi \, sP}$ and $V(t) = e^{\imi \, tQ}$ be two groups of unitaries indexed by the parameters $s,t \in \R$.  They obey:  $$U(s) V(t) = e^{\imi \,s\cdot t} \,V(t) U(s).$$  These are similarly generalised to multiple degrees of freedom: for $i \neq j$, $[U_i(s),V_j(t)] = [U_i(s),U_j(t)] = [V_i(s),V_j(t)] = 0$.

This relation is instantiated both by Heisenberg's infinite-dimensional matrices and by Schr\"odinger's multiplication and differentiation operators on $L^2(\R)$ (which act on functions $f \in L^2(\R)$ by $X(f)(x) = xf(x)$ and $P(f)(x) = \frac{\partial}{\partial x}f(x)$ respectively).  The motivation of the Stone-von Neumann theorem was to assure the equivalence of the matrix mechanics picture and the wave mechanics picture of quantum theory.  It asserts that all manifestations of the Weyl commutation relations are unitarily equivalent.  The following modern statement of the theorem is found in e.g. \cite{qtfm}.

\begin{thm}[Stone-von Neumann \cite{neumann1931eindeutigkeit,stone1930linear} , 1930]  Suppose $\{A_i\}_{i \in [n]}$, $\{B_i\}_{i \in [n]}$ are self-adjoint operators that act irreducibly on a Hilbert space $\ch$, i.e. the only closed subspaces of $\ch$ invariant under every $e^{\imi \, sA}$ and $e^{\imi \, tB}$ are trivial, and satisfy the Weyl commutation relations.  Then there exists a unitary map $G: \ch \to L^2(\R)$, unique up to phase, such that: $$Ge^{\imi \, sA_j}G^{-1} = e^{\imi \, sX_j} \quad \mathrm{ and }\quad  Ge^{\imi \, tB_j}G^{-1} = e^{\imi \, tP_j} .$$ 

\end{thm}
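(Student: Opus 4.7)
The plan is to reproduce von Neumann's classical argument: extract a canonical ``vacuum'' unit vector from $\ch$ by Gaussian-averaging the Weyl operators, and use this vector to build the intertwining unitary. First, I would package the generators into Weyl operators $W(\hat{s},\hat{t}) = e^{-\imi\,\hat{s}\cdot\hat{t}/2}\,U(\hat{s})V(\hat{t})$ for $\hat{s},\hat{t}\in\R^n$; the Weyl commutation relations rearrange into the multiplication law $W(\hat{s}_1,\hat{t}_1)W(\hat{s}_2,\hat{t}_2) = e^{\imi\phi}\, W(\hat{s}_1+\hat{s}_2,\hat{t}_1+\hat{t}_2)$ for an explicit bilinear phase $\phi$, together with $W(\hat{s},\hat{t})^* = W(-\hat{s},-\hat{t})$. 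The same formulas hold on $L^2(\R^n)$ with analogous operators $W_S(\hat{s},\hat{t})$ built from $\hat{X},\hat{P}$.

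Next, form the Gaussian average
$$P \;=\; (2\pi)^{-n}\int_{\R^{2n}} e^{-(|\hat{s}|^2+|\hat{t}|^2)/4}\;W(\hat{s},\hat{t})\;d\hat{s}\,d\hat{t}.$$
The crucial calculation, which iterates the multiplication law above against the standard Gaussian integral $\int e^{-x^2/2 + \imi a x}\,dx = \sqrt{2\pi}\,e^{-a^2/2}$, gives
$$PW(\hat{s},\hat{t})P \;=\; e^{-(|\hat{s}|^2+|\hat{t}|^2)/4}\,P.$$
Setting $\hat{s}=\hat{t}=0$ yields $P^2=P$, and $P=P^*$ follows from the symmetry of the Gaussian kernel combined with $W(\hat{s},\hat{t})^*=W(-\hat{s},-\hat{t})$. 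If $|\Omega_1\rangle,|\Omega_2\rangle\in\operatorname{Ran}(P)$ were orthogonal unit vectors, then all matrix elements $\langle\Omega_1|W(\hat{s},\hat{t})|\Omega_2\rangle$ would vanish, contradicting irreducibility via density of the $W$-orbit of $|\Omega_2\rangle$. Hence $P = |\Omega\rangle\langle\Omega|$ for a unit vector $|\Omega\rangle$ unique up to phase; the same recipe inside $L^2(\R^n)$ identifies the Gaussian ground state $\Omega_S(x) = \pi^{-n/4}e^{-|x|^2/2}$.

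Finally, I would define $G:\ch\to L^2(\R^n)$ by $W(\hat{s},\hat{t})|\Omega\rangle \mapsto W_S(\hat{s},\hat{t})\Omega_S$ and extend linearly. The key identity above forces the matrix elements $\langle W(\hat{s}_1,\hat{t}_1)\Omega\,|\,W(\hat{s}_2,\hat{t}_2)\Omega\rangle$ to depend only on the Weyl relations, so they coincide with their Schr\"odinger counterparts; thus $G$ is well-defined, isometric on the dense linear span of the $W$-orbit of $|\Omega\rangle$, and extends by continuity to a unitary intertwining $W$ with $W_S$. This immediately yields the claimed intertwining of the one-parameter groups $e^{\imi\,sA_j} \leftrightarrow e^{\imi\,sX_j}$ and $e^{\imi\,tB_j} \leftrightarrow e^{\imi\,tP_j}$. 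Uniqueness up to a unimodular phase then follows from Schur's lemma applied to $G_2 G_1^{-1}$, which commutes with the irreducible representation $W_S$.

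The hardest step is the analytic content hidden inside the Gaussian average: justifying the Fubini manipulations, showing that the operator-valued integral $P$ converges in (say) the weak operator topology to a bounded operator, and verifying $P\neq 0$ so that the one-dimensional range argument is non-vacuous. These are the standard obstacles of working with unbounded self-adjoint generators; crucially, the argument never touches $A_j,B_j$ directly but only their bounded, exponentiated Weyl form, which is precisely why Weyl introduced the exponentiated commutation relations in the first place.
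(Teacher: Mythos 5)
Your proposal is von Neumann's classical Gaussian-averaging argument, and in outline it is the correct and standard route to this statement: package the groups into Weyl operators, smear them against a Gaussian to get a self-adjoint idempotent $P$, use irreducibility to force $\operatorname{Ran}(P)$ to be one-dimensional, transport the resulting vacuum vector to the harmonic-oscillator ground state in $L^2(\R^n)$, and get uniqueness from Schur's lemma. Note, however, that the paper never proves this theorem at all — it is quoted as background with a citation — so there is no in-paper proof to match; the paper's self-contained argument is for the \emph{discrete} analogue (Lemma \ref{uniqueunitarytuple}), and it is genuinely different in spirit: there one shows the $d^{2n}$ products $U^{\hat{p}}V^{\hat{q}}$ are Hilbert–Schmidt orthogonal, hence a basis of the full matrix algebra, so the assignment $U^{\hat{p}}V^{\hat{q}} \mapsto \tilde{U}^{\hat{p}}\tilde{V}^{\hat{q}}$ extends to a $*$-automorphism which is inner by Skolem–Noether, with the intertwiner given explicitly by $G\ket{\hat{z}} = V^{\hat{z}}\ket{u_0}$ (Theorem \ref{explicitunitary}). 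That purely algebraic argument is much more elementary but has no infinite-dimensional counterpart (there is no finite spanning family of Weyl operators), which is exactly what your Gaussian projection buys: it manufactures the analogue of the eigenvector $\ket{u_0}$ analytically. Two caveats on completeness: you flag but do not actually establish that the weak-operator integral defining $P$ converges (this is routine, since $(\hat{s},\hat{t})\mapsto W(\hat{s},\hat{t})$ is weakly continuous and uniformly bounded and the Gaussian weight is integrable) and, more importantly, that $P\neq 0$; the latter is a genuine step of the classical proof — one shows that $P=0$ would force the Gauss–Fourier transform of every matrix element $\langle\psi, W(\hat{s},\hat{t})\phi\rangle$ to vanish, hence the matrix elements themselves, contradicting $W(0,0)=\I$ — and your argument is not complete until it is supplied. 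With those details filled in, your proof is correct, and it is the appropriate argument for the continuous statement, complementary to the finite-dimensional method the paper actually uses for its discrete version.
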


The Weyl relations are most conveniently encapsulated in terms of the Heisenberg group.

\begin{defn}For any $n \geq 1$ and field $\F$, the \emph{$n$-Heisenberg group} $\He_n(\F)$ is $$\left\{\begin{pmatrix}
1 & p_1 & \cdots & p_n & c\\
 & 1 & & & q_1\\
 &  & \ddots & & \vdots\\
 & & & 1 & q_n\\
 & & & & 1
\end{pmatrix} \in M_{n+2}(\F) \;|\; p_1, ..., p_n, q_1, ..., q_n ,c \in \F\right\}.$$

\end{defn}

Unitary representations of $\He_n(\R)$ with the specified central character $c \mapsto e^{\imi c}$ give $2n$ one-parameter groups of unitaries satisfying the Weyl relations.

\subsection{Higher-dimensional Pauli gates}

In quantum computing, we are typically more interested in finite-dimensional systems.  The Heisenberg group can be defined over finite fields for prime $d$.  The canonical representation of $\He_n(\Z_d)$ gives the generalised \emph{Pauli gates}.  For the $n=1$ case: $$\begin{pmatrix}
1 & p & c\\
 & 1 &  q\\
& & 1
\end{pmatrix} \mapsto \omega^c Z^{p} X^{q}$$ where, in the standard orthonormal basis $\{ \ket{z} \;|\; z \in \Z_d\}$ of $\C^d$:
$$Z \ket{z} = \omega^z \ket{z} \quad X \ket{z} = \ket{z + 1 \text{ (mod } d\text{)}}.$$

For $n>1$ and $i \in [n]$, define $Z_i \in M_{d^n}(\C)$ to be a tensor product of $n-1$ identity matrices of size $d \times d$ with $Z$ in the $i$-th factor: $\I \otimes ...  \otimes  Z \otimes ...  \otimes  \I$; $X_i$ is defined similarly.  The $Z_i, X_i$ are the \emph{basic Pauli gates} and satisfy the commutation relation $$ Z_i X_i = \omega X_i Z_i$$ with basic gates associated to different factors commuting.

These give a representation of $\He_n(\Z_d)$: $$\begin{pmatrix}
1 & p_1 & \cdots & p_n & c\\
 & 1 & & & q_1\\
 &  & \ddots & & \vdots\\
 & & & 1 & q_n\\
 & & & & 1
\end{pmatrix} \mapsto \omega^c \prod_{i \in [n]} Z_i^{p_i} X_i^{q_i}.$$

\begin{defn}\label{paulidefn}The group of \emph{Pauli gates} is denoted $\cc^n_1 = \{\omega^c Z^{\hat{p}} X^{\hat{q}}\ \; | \; c \in \Z_d, (\hat{p}, \hat{q}) \in \Z_d^{2n} \}$
\end{defn}

A discrete analogue of the Stone-von Neumann theorem, a corollary of Mackey's vast generalisation \cite{mackey1949theorem} of the original version, asserts the unitary equivalence of different representations of $\He_n(\Z_d)$ that agree on the centre.  Due to its importance to the remainder of this paper, we defer an elementary, self-contained proof to the following section (Lemma \ref{uniqueunitarytuple}).

We will later require the two following simple lemmas.

\begin{lemma}\label{xdiagcomm}
For $n \geq 1$ and a function $f: \Z^n_d \to \C$, $$X_i D[f] = D[T_if] X_i$$ where $T_if(\hat{z}) = f(z_1, ..., z_i -1, ..., z_n)$ denotes a translation in the $i$-th component of $f$.
\end{lemma}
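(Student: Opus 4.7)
The plan is to verify the identity directly on the standard orthonormal basis $\{\ket{\hat{z}} : \hat{z} \in \Z_d^n\}$; since both $X_i D[f]$ and $D[T_if] X_i$ are linear operators on $\C^{d^n}$, agreement on a basis implies the operator identity. First I would compute the left-hand side: applying $D[f]$ first gives the scalar $f(\hat{z})$, then $X_i$ shifts the $i$-th component, yielding $X_i D[f] \ket{\hat{z}} = f(\hat{z}) \ket{z_1, \ldots, z_i + 1, \ldots, z_n}$. Then I would compute the right-hand side: $X_i$ first produces $\ket{z_1, \ldots, z_i + 1, \ldots, z_n}$, and $D[T_if]$ multiplies by $(T_if)(z_1, \ldots, z_i+1, \ldots, z_n)$. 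By the definition of the translation, this last scalar unpacks to $f(z_1, \ldots, (z_i+1) - 1, \ldots, z_n) = f(\hat{z})$, matching the left-hand side.

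There is no genuine obstacle here; the content of the lemma is just the bookkeeping identity $f \circ \sigma_i^{-1} \circ \sigma_i = f$, where $\sigma_i$ is the cyclic shift in the $i$-th coordinate induced by $X_i$, repackaged so that $X_i$ can be commuted past any diagonal operator at the cost of translating its defining function. If desired, one can also give a one-line coordinate-free derivation: since $X_i$ permutes the computational basis according to $\sigma_i(\hat{z}) = (z_1, \ldots, z_i+1, \ldots, z_n)$, conjugation by $X_i$ acts on diagonal operators by pulling back the defining function along $\sigma_i^{-1}$, i.e. $X_i D[f] X_i^{-1} = D[f \circ \sigma_i^{-1}] = D[T_i f]$, and rearranging gives the stated form. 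All arithmetic in the $i$-th coordinate is of course taken modulo $d$, consistent with the earlier definition of $X$.
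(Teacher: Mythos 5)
Your proposal is correct and follows essentially the same route as the paper: both verify the identity on the computational basis, showing each side sends $\ket{\hat{z}}$ to $f(\hat{z})\,\ket{z_1,\ldots,z_i+1,\ldots,z_n}$; your extra coordinate-free remark is a harmless restatement of the same calculation.
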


\begin{proof}
Both sides map $\ket{\hat{z}}$ to $f(z) \ket{z_1, ..., z_i + 1, ..., z_n}$.
\end{proof}

\begin{lemma}\label{diagmatcommz}
For $n \geq 1$, a matrix $M \in M_{d^n}(\C)$ is diagonal if and only if it commutes with $Z_i$ for all $i \in [n]$.
\end{lemma}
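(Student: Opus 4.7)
The plan is to handle the two directions separately. The forward implication is essentially trivial: if $M$ is diagonal then it commutes with every $Z_i$ because each $Z_i$ is itself diagonal in the standard basis (with diagonal entries $\omega^{z_i}$ in the relevant tensor factor), and any two diagonal matrices commute. I would dispatch this in a single sentence.

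For the converse, my strategy is to identify the standard basis $\{\ket{\hat{z}} : \hat{z} \in \Z_d^n\}$ as the unique simultaneous eigenbasis of the commuting family $\{Z_i\}_{i \in [n]}$. Explicitly, $Z_i \ket{\hat{z}} = \omega^{z_i} \ket{\hat{z}}$, so $\ket{\hat{z}}$ is a joint eigenvector with eigenvalue tuple $(\omega^{z_1}, \ldots, \omega^{z_n})$. Because $\omega$ is a primitive $d$-th root of unity, distinct $\hat{z} \in \Z_d^n$ produce distinct tuples, so each joint eigenspace of the $Z_i$ is one-dimensional and is spanned by a single standard basis vector.

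Now suppose $M$ commutes with every $Z_i$. I would invoke the standard fact that commuting operators preserve each other's eigenspaces: applying this to all of the $Z_i$ simultaneously, $M$ must preserve every joint eigenspace. Since each such space is one-dimensional, $M$ acts on it as a scalar, i.e. $M\ket{\hat{z}} = \lambda_{\hat{z}} \ket{\hat{z}}$ for some $\lambda_{\hat{z}} \in \C$. This is exactly the statement that $M$ is diagonal in the standard basis.

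There is no real obstacle here; the lemma is elementary linear algebra once one notices that the commuting family $\{Z_i\}$ has joint spectrum in bijection with $\Z_d^n$. A direct alternative that sidesteps invoking eigenspace-preservation is to compute matrix elements: the relation $M Z_i = Z_i M$ sandwiched between $\bra{\hat{z}}$ and $\ket{\hat{w}}$ gives $(\omega^{w_i} - \omega^{z_i})\, \bra{\hat{z}} M \ket{\hat{w}} = 0$ for each $i$, forcing $\bra{\hat{z}} M \ket{\hat{w}} = 0$ whenever $\hat{z} \neq \hat{w}$.
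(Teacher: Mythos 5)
Your proposal is correct; both directions are sound, and the off-diagonal matrix-element computation $(\omega^{w_i}-\omega^{z_i})\bra{\hat{z}}M\ket{\hat{w}}=0$ is watertight since $\omega$ is a primitive $d$-th root of unity and $z_i,w_i\in\Z_d$. The forward direction matches the paper exactly. For the converse, however, you take a different route: the paper expands each rank-1 projector onto a standard basis vector as a character sum, $\ket{\hat{z}}\bra{\hat{z}} = d^{-n}\sum_{\hat{p}\in\Z_d^n}\omega^{-\hat{p}\cdot\hat{z}}Z^{\hat{p}}$, so any $M$ commuting with every $Z_i$ commutes with all of these projectors and is therefore diagonal. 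You instead argue spectrally, observing that the joint eigenspaces of the commuting family $\{Z_i\}$ are one-dimensional and spanned by the standard basis vectors, so $M$ must act as a scalar on each; your alternative matrix-element argument is the most elementary version of this and avoids even citing eigenspace preservation. What the paper's approach buys is the explicit observation that the full diagonal algebra is generated by the $Z^{\hat{p}}$ via discrete Fourier inversion, which is in keeping with the phase-space and character-sum machinery used elsewhere in the paper; what yours buys is a completely self-contained two-line computation. Either proof is acceptable here.
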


\begin{proof}
If $M$ is diagonal, it commutes with the diagonal $Z_i$.  Conversely, if $M$ commutes with each $Z_i$, it commutes with each rank-1 projector onto a standard basis vector $\ket{\hat{z}}\bra{\hat{z}} = d^{-n} \sum_{\hat{p} \in \Z_d^n} \omega^{-\hat{p} \cdot \hat{z}} Z^{\hat{p}}$ and is therefore diagonal.
\end{proof}

\subsection{The Clifford group and the symplectic phase space formalism}

The Pauli gates form the basis of the error-correcting codes necessary for making quantum computation a practical reality.  The set of Clifford gates can be performed fault-tolerantly on data encoded using these stabiliser codes.

\begin{defn}\label{cliffdefn}The \emph{Clifford gates} are those unitaries that preserve the group of Pauli gates under conjugation:$$\cc_2^n = \{G \in \cu(d^n) \;|\; G \cc_1^n G^* \subset \cc_1^n \}.$$

\end{defn}
\noindent Being the normaliser of a subgroup of the unitaries, the set of Clifford gates form a group.

When $d$ is an odd prime, the Pauli and Clifford groups admit a rich phase space formalism in terms of a discrete symplectic vector space \cite{gross2006hudson}.  We limit our presentation of this topic to those elements required below.

The \emph{phase space} is $\Z_d^{2n}$ and a typical phase point is usually denoted $(\hat{p},\hat{q})$.  The symplectic inner product $[\cdot,\cdot]: \Z_d^{2n} \times \Z_d^{2n} \to \Z_d$ is defined by $[(\hat{p_1},\hat{q_1}),(\hat{p_2},\hat{q_2})] = \sum_{i \in [n]} p_1 q_2 - p_2 q_1$ (mod $d$).  To each phase point, we associate the Pauli gate $W(\hat{p},\hat{q}) = \omega^{-2^{-1}\hat{p}\cdot\hat{q}} Z^{\hat{p}} X^{\hat{q}} $ where $2^{-1}$ denotes the multiplicative inverse of $2$ in $\Z_d$.  They obey the multiplication law: $W(\hat{p_1},\hat{q_1})W(\hat{p_2},\hat{q_2}) = \omega^{2^{-1} [(\hat{p_1},\hat{q_1}),(\hat{p_2},\hat{q_2})]} W(\hat{p_1}+\hat{p_2},\hat{q_1}+\hat{q_2})$.

A set of $m$ Pauli gates $\{\omega^{c_i} Z^{\hat{p_i}} X^{\hat{q_i}} \}_{i \in [m]}$ is \emph{independent} if no nontrivial product of them equals the identity.  This is equivalent to the set $\{(\hat{p_i},\hat{q_i})\}_{i \in [m]}$ being a linearly independent subset of $\Z_d^{2n}$.  

The Clifford gates, up to phase, are in correspondence with affine symplectic transformations of the phase space.  First, we define the group of projective Cliffords as the quotient group of Cliffords modulo phase: $[\cc_2^n] = \cc_2^n / \mathbb{T}$.  The group $Sp(n,\Z_d) \ltimes \Z_d^{2n}$ of affine symplectic transformations of $\Z_d^{2n}$ is are pairings of $2n \times 2n$ symplectic matrices and translations in $\Z_d^{2n}$ with the composition law: $$(S,v) \circ (T,w) = (ST, Sw + v).$$  There is a (\emph{Weil} or \emph{metaplectic}) projective representation $\rho: Sp(n,\Z_d) \ltimes \Z_d^{2n} \to [\cc_2^n]$ that is an isomorphism between the groups of affine symplectic transformations and projective Cliffords. 

\subsection{The Clifford hierarchy}

While Clifford gates can be implemented fault-tolerantly, they are not a sufficiently rich gate set to perform arbitrary quantum computations.  Motivated by the need to implement non-Clifford gates fault-tolerantly, Gottesman-Chuang introduced the Clifford hierarchy.  

\begin{defn}\label{ch}The \emph{Clifford hierarchy} is an inductively defined sequence of sets of gates.  For $k > 1$, the \emph{$k$-th level of the Clifford hierarchy} is the set:  $$\cc^n_k = \{G \in \cu(d^n) \;|\; G \cc_1^n G^* \subset \cc^n_{k-1}\}.$$
\end{defn}

The levels of the Clifford hierarchy are nested: $\cc_k^n \subset \cc_{k+1}^n$.  While the first two levels form groups, higher levels do not.  However the sets $\{G \in \cc^n_k \, | \, G\text{ is diagonal}\}$ do form groups.  The sets $\cc^n_k$ are closed under left or right multiplication by Clifford gates: for $k > 1$, $\cc_2^n \, \cc_k^n \, \cc_2^n = \cc_k^n$ \cite{zeng2008semi}.

Gates $G \in \cc_3^n \setminus \cc_2^n$ in the strict third level of the Clifford hierarchy can be fault-tolerantly implemented via quantum gate teleportation to achieve universality when given access to an appropriate resource \emph{magic state} of $2n$ qudits.  The problem of implementing a non-Clifford gate is thus reduced to the problem of preparing a magic state; a task which can be done offline and in advance of a computation.  Higher-level gates can be implemented via a recursive procedure requiring additional ancillary resources.

\subsection{Semi-Clifford gates and one-bit teleportation}

Zhou-Leung-Chuang introduced a simplified gate teleportation protocol, based on Bennett-Gottesman's one-bit teleportation, capable of implementing certain qubit Clifford hierarchy gates using half the ancillary resources required in the original Gottesman-Chuang protocol.  This class of gates includes the diagonal Clifford hierarchy gates.  Zeng-Chen-Chuang introduced the notion of semi-Clifford gates which are `nearly diagonal' in the sense of being within Clifford corrections of diagonal Clifford hierarchy gates:

\begin{defn}\label{sc}A gate $G \in \cu(d^n)$ is \emph{semi-Clifford} if $G = C_1 D C_2$ where $C_1,C_2 \in \cc_2^n$ and $D$ is diagonal.\end{defn}

\begin{defn}For $k \geq 1$ the $k$-th level semi-Clifford gates are: $$\semc_k^n = \{G \in \cc_k^n \;|\; G = C_1 D C_2 \text{ where } C_1,C_2 \in \cc_2^n, D \in \cc_k^n \text{ is diagonal}\}.$$
\end{defn} \noindent They gave a protocol that expands the class of gates that can be implemented efficiently via one-bit gate teleportation to include the semi-Clifford gates.  We generalise this protocol to the qudit case in Section \ref{quditteleport}.

\section{The discrete Stone-von Neumann theorem}\label{discSVN}

\begin{defn}
An ordered pair of unitaries $(U,V) \in \cu(d^n) \times \cu(d^n)$ is a \emph{conjugate pair} if
\begin{enumerate}\itemsep0em 
\item $U^d = \I$ and $V^d = \I$,
\item $UV = \omega VU$.
\end{enumerate}
\end{defn}

\subsection{Single qudits}

\begin{lemma}\label{uvtraceless}
Suppose $U, V \in \cu(d)$ satisfy $UV = \omega VU$.  Then $U^p V^q$ are traceless for $(p,q) \in \Z_d^2$ with $(p,q) \neq (0,0)$.
\end{lemma}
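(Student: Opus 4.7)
The plan is to exploit the Weyl commutation relation to show that every nontrivial monomial $U^p V^q$ is conjugate, via $U$ or $V$, to a nontrivial scalar multiple of itself; cyclicity of trace will then force its trace to vanish.

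First, I would iterate $UV = \omega VU$ to derive the basic identities $V U^p = \omega^{-p} U^p V$ and $U V^q = \omega^q V^q U$. Both are one-line inductions on $p$ and $q$ respectively, using the given commutation relation at each step.

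With these in hand, a direct computation gives $V(U^p V^q)V^{-1} = \omega^{-p} U^p V^q$ and, symmetrically, $U(U^p V^q)U^{-1} = \omega^q U^p V^q$. Taking traces and invoking cyclicity, we obtain $(1 - \omega^{-p})\operatorname{tr}(U^p V^q) = 0$ together with $(1 - \omega^{q})\operatorname{tr}(U^p V^q) = 0$. Since $d$ is prime and $(p,q) \neq (0,0)$ in $\Z_d^2$, at least one of $p$ or $q$ is nonzero modulo $d$, so one of the scalar factors is a nontrivial $d$-th root of unity. The corresponding equation forces $\operatorname{tr}(U^p V^q) = 0$.

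There is no serious obstacle here beyond bookkeeping the sign in the iterated commutation identity. The same conjugation trick is the engine behind the familiar fact that the Pauli gates form a Hilbert-Schmidt orthogonal basis of $M_d(\C)$, and the calculation will be reused implicitly in establishing the discrete Stone-von Neumann theorem that follows.
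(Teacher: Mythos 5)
Your proof is correct and takes essentially the same route as the paper: both pick up a nontrivial $d$-th root of unity by commuting a single $U$ (or $V$) past the monomial and then use cyclicity (equivalently, conjugation-invariance) of the trace to conclude $\operatorname{Tr}(U^pV^q)=0$ whenever $p\neq 0$ or $q\neq 0$. The paper's version is just your conjugation identity written as $\mathrm{Tr}(U^pV^q)=\mathrm{Tr}(U^{p-1}UV^q)=\omega^q\mathrm{Tr}(U^{p-1}V^qU)=\omega^q\mathrm{Tr}(U^pV^q)$.
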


\begin{proof}  Suppose first that $q \neq 0$.
$$ \mathrm{Tr}(U^p V^q) = \mathrm{Tr}(U^{p-1}U V^q) = \omega^q \mathrm{Tr}(U^{p-1} V^q U) = \omega^q \mathrm{Tr}(U^p V^q)$$
Since, $\omega^q \neq 1$, the above expression vanishes.  It similarly vanishes if $p \neq 0$.
\end{proof}

\begin{lemma}\label{uvbasis}
Suppose $U, V \in GL_d(\C)$ form a conjugate pair.  Then the matrices $\{U^p V^q \;|\; (p,q) \in \Z^2_d\}$ are orthogonal in $M_{d}(\C)$ with the Hilbert-Schmidt inner product $\langle A,B\rangle_{\mathrm{HS}} = \mathrm{Tr}(A^*B)$ and hence form a basis of $M_{d}(\C)$.
\end{lemma}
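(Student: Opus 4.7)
My plan is to reduce every pairwise Hilbert-Schmidt inner product $\langle U^{p_1} V^{q_1}, U^{p_2} V^{q_2}\rangle_{\mathrm{HS}}$ to a scalar multiple of a single trace $\mathrm{Tr}(U^p V^q)$, so that Lemma \ref{uvtraceless} can finish the job.

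First I would rewrite $(U^{p_1} V^{q_1})^* U^{p_2} V^{q_2}$ using the axioms $U^d = V^d = \I$, which make $U^{-1} = U^{d-1}$ and $V^{-1} = V^{d-1}$ available and let us express $U^*$, $V^*$ as positive powers of $U$, $V$. The resulting monomial is $V^{-q_1} U^{p_2 - p_1} V^{q_2}$. The next step is to move the leading $V^{-q_1}$ past $U^{p_2 - p_1}$ using the iterated commutation relation $V^a U^b = \omega^{-ab} U^b V^a$ (which follows from $UV = \omega VU$ by induction on $a$ and $b$). After collecting terms, the inner product becomes a unimodular phase times $\mathrm{Tr}(U^{p_2 - p_1} V^{q_2 - q_1})$, with the exponents viewed in $\Z_d$.

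By Lemma \ref{uvtraceless}, this trace vanishes whenever $(p_2 - p_1, q_2 - q_1) \not\equiv (0,0) \pmod d$, which gives orthogonality. When the two pairs agree in $\Z_d^2$, the relations $U^d = V^d = \I$ reduce the monomial to the identity and the inner product is exactly $\mathrm{Tr}(\I) = d$. So the $d^2$ matrices $\{U^p V^q \,|\, (p,q) \in \Z_d^2\}$ are pairwise orthogonal with common Hilbert-Schmidt norm $\sqrt d$, in particular nonzero and therefore linearly independent. Since $\dim_\C M_d(\C) = d^2$, they form a basis.

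I do not anticipate any serious obstacle: the whole argument is bookkeeping with the commutation relation together with the preceding lemma. The only point requiring a little care is that all exponent arithmetic must be performed modulo $d$, which is precisely what the defining axiom $U^d = V^d = \I$ of a conjugate pair legitimises; this same axiom is what lets us treat $U^*$ and $V^*$ as powers of $U$ and $V$ in the first place.
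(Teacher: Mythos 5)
Your proof is correct and follows essentially the same route as the paper's: reduce each Hilbert--Schmidt inner product to a phase times $\mathrm{Tr}(U^{p_2-p_1}V^{q_2-q_1})$ via the commutation relation, invoke Lemma \ref{uvtraceless}, and conclude linear independence by dimension count. The extra observation that each matrix has norm $\sqrt{d}$ is a harmless refinement of the paper's remark that unitarity makes the products nonzero.
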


\begin{proof}  $\langle U^{p_1} V^{q_1},U^{p_2} V^{q_2}\rangle_{\mathrm{HS}} = \mathrm{Tr}(V^{-q_1} U^{-p_1} U^{p_2} V^{q_2}) \propto \mathrm{Tr}(U^{p_2-p_1} V^{q_2-q_1})$ as all terms can be commuted freely by possibly introducing factors of $\omega$ or $\omega^{-1}$.  This vanishes unless $p_1 \equiv p_2$ and  $q_1 \equiv q_2$ (mod $d$).    Since $U$, $V$ are unitary, their products are nonzero.  An orthogonal set of nonzero matrices is linearly independent.
\end{proof}

\begin{lemma}[Discrete Stone-von Neumann theorem, single-qudit version]\label{uniqueunitary}
Suppose $(U, V)$ and $ (\tilde{U},  \tilde{V})$ are two conjugate pairs.  There is a unitary $G$, unique up to phase, such that $\tilde{U} = GUG^*$ and $\tilde{V} = GVG^*$.
\end{lemma}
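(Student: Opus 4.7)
The plan is to show that any conjugate pair $(U,V)$ canonically determines an orthonormal basis of $\C^d$ on which $U$ acts like $Z$ and $V$ acts like $X$. Once both $(U,V)$ and $(\tilde U,\tilde V)$ are brought into this canonical form, the unitary $G$ is simply the change-of-basis map between the two resulting bases, and uniqueness will follow from Lemma \ref{uvbasis}.

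The construction proceeds in a few short steps. First I would analyse the spectrum of $U$: since $U^d=\I$, every eigenvalue of $U$ is a $d$-th root of unity, and the commutation relation $UV=\omega VU$ gives that if $U\ket{\psi}=\lambda\ket{\psi}$ then $U(V\ket{\psi}) = \omega\lambda V\ket{\psi}$. Hence the spectrum of $U$ is invariant under multiplication by $\omega$, so it equals $\{\omega^j : j\in\Z_d\}$, and each eigenvalue has multiplicity exactly $1$ by dimension counting. Choose a unit eigenvector $\ket{0}$ with $U\ket{0}=\ket{0}$, and define $\ket{j}:=V^j\ket{0}$ for $j\in\Z_d$. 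Then $U\ket{j}=\omega^j\ket{j}$, and since these are eigenvectors of the unitary $U$ with distinct eigenvalues, they form an orthonormal basis; moreover $V\ket{j}=\ket{j+1\bmod d}$ because $V^d=\I$. Repeating the construction for $(\tilde U,\tilde V)$ yields an orthonormal basis $\{\widetilde{\ket{j}}\}$ with identical action. Define $G\in\cu(d)$ by $G\ket{j}:=\widetilde{\ket{j}}$; a direct check on basis vectors gives $GUG^*=\tilde U$ and $GVG^*=\tilde V$.

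For uniqueness up to phase, suppose $G_1$ and $G_2$ both conjugate $(U,V)$ to $(\tilde U,\tilde V)$. Then $H:=G_2^*G_1$ commutes with both $U$ and $V$, hence with every element of the linear span $\{U^pV^q : (p,q)\in\Z_d^2\}$. By Lemma \ref{uvbasis} this span equals $M_d(\C)$, so $H$ commutes with every matrix and must be a scalar multiple of the identity; unitarity forces the scalar to have modulus $1$.

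The only mildly delicate point is the choice of anchor eigenvector $\ket{0}$ with $U$-eigenvalue exactly $1$, which is why I first establish that the spectrum of $U$ contains all $d$-th roots of unity. Everything else is direct verification using the Weyl relation, and the uniqueness argument is an immediate consequence of the basis property already proved in Lemma \ref{uvbasis}, so I do not anticipate any serious obstacle.
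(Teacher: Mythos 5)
Your proof is correct, but it takes a genuinely different route from the paper. You argue spectrally and constructively: from $U^d=\I$ and $UV=\omega VU$ you deduce that the spectrum of $U$ is exactly $\{\omega^j : j\in\Z_d\}$ with each eigenvalue simple, build the cyclic orthonormal basis $\ket{j}=V^j\ket{0}$ anchored at a $1$-eigenvector, do the same for $(\tilde U,\tilde V)$, and take $G$ to be the change of basis; uniqueness then follows because any intertwiner commutes with the spanning set $\{U^pV^q\}$ of Lemma \ref{uvbasis} and is therefore scalar. The paper instead works at the level of the matrix algebra: it uses Lemma \ref{uvbasis} to define a linear map $U^iV^j\mapsto\tilde U^i\tilde V^j$, checks that this is a $*$-automorphism of $M_d(\C)$, and invokes Skolem--Noether to conclude it is conjugation by a unitary, unique up to phase. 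Your approach is more elementary and explicit --- it avoids Skolem--Noether entirely, uses unitarity directly (orthogonality of eigenvectors of the normal operator $U$), and in effect re-derives the explicit formula $G\ket{z}=V^z\ket{u_0}$ that the paper only states in the theorem following this lemma. What the paper's abstract route buys is economy of verification and immediate portability: the same two-line argument transfers verbatim to the multi-qudit case (Lemma \ref{uniqueunitarytuple}), whereas your spectral construction would there require the additional (straightforward but nontrivial) step of showing that the joint eigenspaces of the commuting $U_1,\dots,U_n$ are one-dimensional and that every joint eigenvalue pattern occurs. One small presentational point: your phrase ``multiplicity exactly $1$ by dimension counting'' is valid only because you have already shown all $d$ distinct eigenvalues occur in a space of dimension $d$; it is worth making that dependence explicit.
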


\begin{proof}
We define $\phi(U) =  \tilde{U}$ and  $\phi(V) =  \tilde{V}$ and prove that this extends to a unique $*$-automorphism of $M_{d}(\C)$.  The $*$-automorphisms of simple matrix algebras are in correspondence with unitaries up to phase (as a consequence of e.g. the Skolem-Noether theorem). 

We first define $\phi(U^i \, V^j) =  \tilde{U}^i \,  \tilde{V}^j$.  As the matrices $U^i \, V^j$ and $ \tilde{U}^i \,  \tilde{V}^j$  form two bases by Lemma \ref{uvbasis}, this is a well-defined vector space automorphism via its unique linear extension to all of $M_{d}(\C)$.  It is easy to check that it respects $*$ and matrix multiplication.  Thus, $\phi$ is an inner automorphism induced by a unitary $G$.
\end{proof}

\begin{thm}The unitary $G$ that carries $(Z,X)$ to $(U,V)$ under conjugation is given by: 
$$G \ket{z} = V^z \ket{u_0} $$
where $\ket{u_0} = G \ket{0}$ is an eigenvector of $U$ with eigenvalue 1.
\end{thm}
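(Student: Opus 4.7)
The plan is to derive the formula directly by exploiting how $Z$ and $X$ act on the computational basis, and then read off $G$'s action on each $\ket{z}$.

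First, I would invoke Lemma \ref{uniqueunitary} applied to the conjugate pairs $(Z,X)$ and $(U,V)$ to obtain a unitary $G$, unique up to phase, such that $U = GZG^*$ and $V = GXG^*$. The phase ambiguity in $G$ corresponds exactly to the freedom in choosing the unit eigenvector $\ket{u_0}$, so it is harmless.

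Second, I would check that $\ket{u_0} := G\ket{0}$ is a $1$-eigenvector of $U$. This is a one-line computation: since $Z\ket{0} = \ket{0}$,
$$U \ket{u_0} = GZG^* G \ket{0} = GZ\ket{0} = G\ket{0} = \ket{u_0}.$$
(Unitarity of $G$ also guarantees $\ket{u_0}$ is a unit vector, so it is nonzero.)

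Third, I would compute $G\ket{z}$ for arbitrary $z \in \Z_d$ by using $\ket{z} = X^z \ket{0}$ together with the intertwining relation $GX = VG$, so that $GX^z = V^z G$. This gives
$$G\ket{z} = G X^z \ket{0} = V^z G \ket{0} = V^z \ket{u_0},$$
which is exactly the desired expression.

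I do not expect any serious obstacle: the entire proof rides on Lemma \ref{uniqueunitary} (existence/uniqueness of $G$) plus the elementary fact that the computational basis is generated by applying powers of $X$ to $\ket{0}$. The only conceptual point worth highlighting is that the content of the theorem lies in the existence of a $1$-eigenvector of $U$ at all; the formula then exhibits $G$ in closed form given such a vector.
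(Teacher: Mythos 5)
Your proof is correct and follows essentially the same route as the paper, which simply applies $G$ to both sides of $\ket{z} = X^z\ket{0}$ and uses the intertwining relations from Lemma \ref{uniqueunitary}. Your additional check that $G\ket{0}$ is indeed a $1$-eigenvector of $U$ is a nice explicit touch but does not change the argument.
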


\begin{proof}
Apply $G$ to both sides of the equation: $\ket{z} = X^z \ket{0}$.
\end{proof}

\subsection{Multiple qudits}

Essentially the same proofs work to establish the multiqudit generalisations of the above results.

\begin{defn}
A \emph{conjugate tuple} $\{(U_1,V_2), ..., (U_n,V_n)\}$ is a set of $n$  conjugate pairs such that any two elements of distinct pairs commute.
\end{defn}

\begin{lemma}
Suppose $\{(U_i,V_i)\}_{i \in [n]}$ is a conjugate tuple.  Then the matrices $\{U^{\hat{p}} V^{\hat{q}} \;|\; (\hat{p},\hat{q}) \in \Z^{2n}_d\}$ are orthogonal in $M_{d^n}(\C)$ with the Hilbert-Schmidt inner product 
%$\langle A,B\rangle_{\mathrm{HS}} = \mathrm{Tr}(B^*A)$
 and hence form a basis of $M_{d^n}(\C)$.
\end{lemma}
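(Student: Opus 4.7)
The plan is to follow the single-qudit argument of Lemma \ref{uvbasis} essentially verbatim, once we have upgraded the tracelessness Lemma \ref{uvtraceless} to the multiqudit setting. So the first step is to prove that $\mathrm{Tr}(U^{\hat{p}} V^{\hat{q}}) = 0$ whenever $(\hat{p}, \hat{q}) \neq (0, 0) \in \Z_d^{2n}$. Suppose some coordinate $p_j$ is nonzero. Because $\{(U_i, V_i)\}_{i \in [n]}$ is a conjugate tuple, $V_j$ commutes with every $U_i, V_i$ for $i \neq j$ and satisfies $V_j U_j V_j^{-1} = \omega^{-1} U_j$. Conjugating $U^{\hat{p}} V^{\hat{q}}$ by $V_j$ therefore multiplies it by $\omega^{-p_j}$ while leaving its trace fixed, so $\mathrm{Tr}(U^{\hat{p}} V^{\hat{q}}) = \omega^{-p_j} \mathrm{Tr}(U^{\hat{p}} V^{\hat{q}})$ forces the trace to vanish. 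The symmetric argument with conjugation by $U_j$ handles the case when some $q_j$ is nonzero.

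Next I would compute the Hilbert–Schmidt inner product of two basis candidates:
\[
\langle U^{\hat{p_1}} V^{\hat{q_1}}, U^{\hat{p_2}} V^{\hat{q_2}} \rangle_{\mathrm{HS}} = \mathrm{Tr}\bigl( V^{-\hat{q_1}} U^{-\hat{p_1}} U^{\hat{p_2}} V^{\hat{q_2}}\bigr).
\]
Using the pairwise commutation of operators from distinct pairs to regroup the factors by index $i$, and then using $U_i V_i = \omega V_i U_i$ within each pair to commute the remaining $U$s past $V$s, this trace becomes $\omega^{\alpha}\, \mathrm{Tr}\bigl(U^{\hat{p_2} - \hat{p_1}} V^{\hat{q_2} - \hat{q_1}}\bigr)$ for some phase $\alpha \in \Z_d$ whose exact value is irrelevant. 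By the first step, this vanishes unless $\hat{p_1} \equiv \hat{p_2}$ and $\hat{q_1} \equiv \hat{q_2}$ modulo $d$, establishing orthogonality.

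Finally, each $U^{\hat{p}} V^{\hat{q}}$ is unitary, hence nonzero, so the orthogonal family is linearly independent. Since there are $d^{2n}$ such matrices and $\dim_\C M_{d^n}(\C) = d^{2n}$, they form a basis.

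I do not expect a genuine obstacle: the only point requiring any care is the bookkeeping of the $\omega$-phases in the reshuffling step, and that the cross-pair commutation hypothesis in the definition of a conjugate tuple is exactly what lets one conjugate by $U_j$ or $V_j$ without disturbing factors indexed by $i \neq j$.
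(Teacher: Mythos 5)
Your proposal is correct and follows essentially the same route as the paper, which simply notes that the single-qudit proofs (Lemmas \ref{uvtraceless} and \ref{uvbasis}) carry over to the multiqudit case; your conjugation-by-$V_j$ (or $U_j$) argument for tracelessness is just a repackaging of the paper's cyclic-trace manipulation, and the rest of the computation and the dimension count match the paper's argument.
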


\begin{lemma}[Discrete Stone-von Neumann theorem]\label{uniqueunitarytuple}
Suppose $\{(U_i,V_i)\}_{i \in [n]}$  and $\{(\tilde{U}_i,\tilde{V}_i)\}_{i \in [n]}$ are two conjugate tuples.  There is a unitary $G$, unique up to phase, such that, for all $i \in [n]$, $\tilde{U_i} = GU_iG^*$ and $\tilde{V_i} = GV_iG^*$.
\end{lemma}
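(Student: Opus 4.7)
The plan is to imitate the single-qudit argument of Lemma \ref{uniqueunitary} essentially verbatim, using the multiqudit orthogonal basis result stated just above. First I would define a linear map $\phi: M_{d^n}(\C) \to M_{d^n}(\C)$ by declaring $\phi(U^{\hat{p}} V^{\hat{q}}) = \tilde{U}^{\hat{p}} \tilde{V}^{\hat{q}}$ on the basis $\{U^{\hat{p}} V^{\hat{q}} \;|\; (\hat{p},\hat{q}) \in \Z_d^{2n}\}$ supplied by the preceding lemma, and extending by linearity. Since the target vectors $\tilde{U}^{\hat{p}} \tilde{V}^{\hat{q}}$ also form a basis of $M_{d^n}(\C)$ (by applying that same lemma to the second conjugate tuple), $\phi$ is a vector-space isomorphism.

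Next I would verify that $\phi$ is a $*$-algebra automorphism. Involution compatibility is immediate from $(U^{\hat{p}} V^{\hat{q}})^* = \omega^{\hat{p} \cdot \hat{q}} V^{-\hat{q}} U^{-\hat{p}} = \omega^{\hat{p} \cdot \hat{q}} \omega^{\hat{p}\cdot\hat{q}} U^{-\hat{p}} V^{-\hat{q}}$, where the same phase is produced for the tilded tuple because the commutation rules $U_i V_i = \omega V_i U_i$ and $[U_i, V_j] = [U_i, U_j] = [V_i, V_j] = 0$ for $i \neq j$ are shared by both tuples. Multiplicativity is the key point and reduces to the same observation: a product $U^{\hat{p}_1} V^{\hat{q}_1} \cdot U^{\hat{p}_2} V^{\hat{q}_2}$ can be brought into the form $\omega^{\hat{q}_1 \cdot \hat{p}_2} U^{\hat{p}_1+\hat{p}_2} V^{\hat{q}_1+\hat{q}_2}$ by shuffling commuting cross-factors and invoking $U_i V_i = \omega V_i U_i$ on same-index factors, and the same phase emerges when the computation is repeated with the tilded operators. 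So $\phi(AB) = \phi(A)\phi(B)$ on basis vectors and hence on all of $M_{d^n}(\C)$.

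Finally, Skolem-Noether (or, equivalently, the fact that the full matrix algebra $M_{d^n}(\C)$ is simple with only inner automorphisms) yields a unitary $G \in \cu(d^n)$, unique up to a scalar phase, such that $\phi(A) = G A G^*$ for all $A$. Specializing to $A = U_i$ and $A = V_i$ gives the desired intertwining relations $\tilde{U}_i = G U_i G^*$ and $\tilde{V}_i = G V_i G^*$. The main (minor) obstacle is just bookkeeping the $\omega$ phases when reordering a product of basis monomials, but this is routine since both tuples obey identical commutation relations, so the phases match on the two sides of $\phi(AB) = \phi(A)\phi(B)$ by construction.
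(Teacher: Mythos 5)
Your proposal is correct and follows essentially the same route as the paper, which proves the single-qudit case (basis of monomials $U^{\hat{p}}V^{\hat{q}}$, linear extension, verification that the map is a $*$-algebra automorphism, then Skolem--Noether to realise it as conjugation by a unitary unique up to phase) and notes that the identical argument carries over to conjugate tuples. The only blemishes are minor sign/phase slips in your explicit reordering identities (e.g.\ the product phase should be $\omega^{-\hat{q}_1\cdot\hat{p}_2}$), which are immaterial since the argument only needs that the same phase arises for both tuples.
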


\begin{thm}\label{explicitunitary}The unitary $G$ that carries $\{(Z_i,X_i)\}_{i \in [n]}$ to $\{(U_i,V_i)\}_{i \in [n]}$ under conjugation is given by: 
$$G \ket{\hat{z}} = V^{\hat{z}} \ket{u_0} $$
where $\hat{z} \in \Z_d^n$ and $\ket{u_0} = G \ket{0}$ is a simultaneous eigenvector of the $U_1, ..., U_n$ with eigenvalue 1.
\end{thm}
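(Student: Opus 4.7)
The plan is to mirror the single-qudit proof exactly, now using the multiqudit discrete Stone--von Neumann theorem (Lemma~\ref{uniqueunitarytuple}) as the existence engine. The heart of the single-qudit argument is the trivial identity $\ket{z} = X^z\ket{0}$ together with the fact that $\ket{0}$ is an eigenvector of $Z$ with eigenvalue $1$. Both of these facts generalise immediately: $\ket{\hat z} = X^{\hat z}\ket{0\cdots 0}$, and $Z_i\ket{0\cdots 0} = \ket{0\cdots 0}$ for every $i \in [n]$.

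First I would invoke Lemma~\ref{uniqueunitarytuple} to obtain a unitary $G$, unique up to phase, satisfying $U_i = G Z_i G^*$ and $V_i = G X_i G^*$ for all $i \in [n]$. Next, setting $\ket{u_0} := G\ket{0\cdots 0}$, I would verify that $\ket{u_0}$ is a simultaneous $1$-eigenvector of the $U_i$ by the direct computation
$$U_i\ket{u_0} = (GZ_iG^*)\, G \ket{0\cdots 0} = G(Z_i\ket{0\cdots 0}) = G\ket{0\cdots 0} = \ket{u_0}.$$
Finally, for any $\hat z \in \Z_d^n$, applying $G$ to the identity $\ket{\hat z} = X^{\hat z}\ket{0\cdots 0}$ yields
$$G\ket{\hat z} = G X^{\hat z}\ket{0\cdots 0} = (GX^{\hat z} G^*)\, G\ket{0\cdots 0} = V^{\hat z}\ket{u_0},$$
which is the claimed formula.

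The main obstacle, modest though it is, is whether the $U_i$ genuinely admit a common $1$-eigenvector; without such a vector the statement would be vacuous. This is supplied for free by the construction above: since $\ket{0\cdots 0}$ is a common $1$-eigenvector of the $Z_i$, its image under the intertwining $G$ is automatically a common $1$-eigenvector of the $U_i$. The only freedom in choosing $\ket{u_0}$ is a global phase, matching the fact that $G$ itself is determined only up to a global phase by Lemma~\ref{uniqueunitarytuple}; the formula $G\ket{\hat z} = V^{\hat z}\ket{u_0}$ therefore defines the intertwiner unambiguously once $\ket{u_0}$ is fixed.
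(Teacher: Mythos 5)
Your proof is correct and follows the paper's argument exactly: the paper proves the single-qudit case by applying $G$ to $\ket{z} = X^z\ket{0}$ and notes the multiqudit version follows by the same reasoning, which is precisely your computation. Your added verification that $\ket{u_0} = G\ket{0\cdots 0}$ is a common $1$-eigenvector of the $U_i$ is a worthwhile detail that the paper leaves implicit.
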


The simplest way to construct $\ket{u_0}$ is to first compute the projectors $P_i = d^{-1} \sum_{p \in \Z_d} U_i^p$ onto the 1-eigenspaces of $U_i$ and to then compute an eigenvector $\ket{u_0}$ with eigenvalue 1 of the rank-1 projector $U_1 \cdots U_n$.

\subsection{An algorithm for enumerating the Clifford hierarchy gates}

\begin{defn}
A conjugate tuple is  \emph{$k$-closed} if it generates a group of $k$-th level gates: $$\{U^{\hat{p}} V^{\hat{q}} \;|\; (\hat{p},\hat{q}) \in \Z^{2n}_d\} \subset \cc^n_k.$$ 
\end{defn}

\begin{thm}\label{Ckfromtuples}
Gates of the $k+1$-th level of the Clifford hierarchy, up to phase, are in bijective correspondence with $k$-closed conjugate tuples.
\end{thm}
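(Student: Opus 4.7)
The plan is to show both directions of the correspondence using the discrete Stone--von Neumann theorem (Lemma \ref{uniqueunitarytuple}) as the main engine, and then observe that the level condition transports cleanly between ``$G$ lies in $\cc^n_{k+1}$'' and ``the tuple is $k$-closed''.

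First I would define the two maps. The forward map sends $[G] \in \cc^n_{k+1}/\mathbb{T}$ to the ordered tuple $\{(GZ_iG^*, GX_iG^*)\}_{i \in [n]}$. This is well-defined on phase classes because conjugation kills overall scalars. It lands in conjugate tuples because conjugation by a unitary preserves the defining relations $U_i^d = \I$, $V_i^d = \I$, $U_iV_i = \omega V_i U_i$, and the commutativity of operators from distinct pairs. The backward map takes a $k$-closed conjugate tuple $\{(U_i,V_i)\}_{i\in[n]}$ to the unique-up-to-phase $G$ furnished by Lemma \ref{uniqueunitarytuple} satisfying $GZ_iG^* = U_i$ and $GX_iG^* = V_i$.

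Next I would check that each map lands in the claimed target. For the forward map, $k$-closedness of $\{(GZ_iG^*,GX_iG^*)\}$ follows because an arbitrary element of the group generated is, up to a scalar, $G Z^{\hat{p}} X^{\hat{q}} G^*$; since $Z^{\hat{p}}X^{\hat{q}} \in \cc^n_1$ and $G \in \cc^n_{k+1}$, Definition \ref{ch} places this in $\cc^n_k$, and $\cc^n_k$ is invariant under multiplication by phases (phases act trivially under conjugation, so the level condition is phase-insensitive). For the backward map, I need $G \in \cc^n_{k+1}$, i.e. $GPG^* \in \cc^n_k$ for every Pauli $P = \omega^c Z^{\hat{p}}X^{\hat{q}}$. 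Conjugation collapses the central phase, and $G Z^{\hat{p}} X^{\hat{q}} G^*$ equals $U^{\hat{p}} V^{\hat{q}}$ up to a scalar picked up from commuting the $U_i$ past the $V_j$. The latter expression lies in $\cc^n_k$ precisely by $k$-closedness of the tuple.

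Finally I would argue the two maps are mutual inverses. Starting from $[G]$, applying forward then backward recovers (by the uniqueness clause of Lemma \ref{uniqueunitarytuple}) the same phase class $[G]$, since $G$ itself carries $\{(Z_i,X_i)\}$ to $\{(GZ_iG^*,GX_iG^*)\}$ by construction. Starting from a $k$-closed conjugate tuple $\{(U_i,V_i)\}$, the backward map produces some $G$, and the forward map returns $\{(GZ_iG^*,GX_iG^*)\} = \{(U_i,V_i)\}$ by the defining property of $G$. The main subtlety, and the only place that actually requires attention, is the bookkeeping of phases: one must verify that the $k$-closedness condition and the level condition truly match once one passes between $Z^{\hat{p}}X^{\hat{q}}$ and $U^{\hat{p}}V^{\hat{q}}$, which is clean because both $\cc^n_k$ and the group generated by a conjugate tuple are closed under multiplication by $d$-th roots of unity inherited from the commutation relations.
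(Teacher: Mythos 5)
Your proposal is correct and follows essentially the same route as the paper: the forward map sends $[G]$ to $\{(GZ_iG^*,GX_iG^*)\}_{i\in[n]}$, with $k$-closedness coming directly from $G\in\cc^n_{k+1}$ acting on Paulis, and the backward map invokes the discrete Stone--von Neumann theorem (Lemma \ref{uniqueunitarytuple}) with $k$-closedness guaranteeing $G\in\cc^n_{k+1}$. Your extra bookkeeping of phases and the explicit check that the maps are mutual inverses are fine refinements of what the paper leaves implicit, but not a different argument.
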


\begin{proof}
The correspondence is given by the map that sends $G \in [\cc_{k+1}^n]$ to the tuple $\{(U_i = GZ_iG^*, V_i = GX_iG^*)\}_{i \in [n]}$.  

Conjugation by $G$ preserves the order of all matrices and the commutation relations between them, so, as $\{(Z_i,X_i)\}_{i \in [n]}$ is a conjugate tuple, so is $\{(U_i,V_i)\}_{i \in [n]}$.  For any $(\hat{p},\hat{q}) \in \Z^{2n}_d$, the product $\prod_{i \in [n]} U_i^{p_i} V_i^{q_i} = G(\prod_{i \in [n]} Z_i^{p_i} X_i^{q_i})G^* \in \cc_k^n$ as it is the conjugation of a Pauli gate by a $\cc_{k+1}^n$ gate.

Conversely, given a $k$-closed conjugate tuple $\{(U_i, V_i)\}_{i \in [n]}$, we can apply Lemma \ref{uniqueunitarytuple} to find a unitary $G$, unique up to phase, such that $U_i = GZ_iG^*, V_i = GX_iG^*$.  The definition of $k$-closedness ensures that the conjugation of any Pauli gate by $G$ is in $\cc_k^n$ and thus, that, $G \in \cc_{k+1}^n$.
\end{proof}

Since the Clifford gates form a group, the condition of $2$-closedness is automatically fulfilled by any conjugate tuple of Cliffords.

\begin{thm}\label{C3fromtuples}
Gates of the third level of the Clifford hierarchy, up to phase, are in bijective correspondence with conjugate tuples of Clifford gates.
\end{thm}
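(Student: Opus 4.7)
The plan is to obtain Theorem \ref{C3fromtuples} as a near-immediate corollary of Theorem \ref{Ckfromtuples}, exploiting the fact (emphasised in the discussion after Definition \ref{ch}) that the second level of the Clifford hierarchy is the single level above the Paulis that happens to form a group. The only thing left to check is that the $2$-closedness condition simplifies to the purely local requirement that every individual $U_i$ and $V_i$ be a Clifford.

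Concretely, I would first invoke Theorem \ref{Ckfromtuples} with $k=2$ to get that $[\cc_3^n]$ is in bijection with $2$-closed conjugate tuples, i.e.\ those conjugate tuples $\{(U_i,V_i)\}_{i \in [n]}$ such that $U^{\hat p}V^{\hat q} \in \cc_2^n$ for all $(\hat p,\hat q) \in \Z_d^{2n}$. Then I would prove the equivalence
\[
\text{$\{(U_i,V_i)\}_{i \in [n]}$ is $2$-closed} \iff U_i,V_i \in \cc_2^n \text{ for every } i \in [n].
\]
The forward direction is trivial: pick $(\hat p,\hat q)$ to be the $2n$ standard basis vectors of $\Z_d^{2n}$ to extract each $U_i$ and each $V_i$ individually from the family of products. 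The reverse direction is exactly where the group structure of $\cc_2^n$ enters: since the Cliffords are closed under multiplication, any product $\prod_i U_i^{p_i} V_i^{q_i}$ of Clifford factors is again Clifford, so the tuple is automatically $2$-closed.

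Combining these two observations chains two bijections together and yields the claimed correspondence between $[\cc_3^n]$ and conjugate tuples of Cliffords. There is no real obstacle here; the only thing worth flagging is that this shortcut is special to the $k=2$ case and cannot be repeated at higher levels, because $\cc_k^n$ for $k \ge 3$ fails to be closed under multiplication, so $k$-closedness at higher levels is a genuinely nontrivial global constraint on the tuple rather than a condition on its individual entries.
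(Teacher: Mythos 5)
Your proposal is correct and matches the paper's own reasoning: the paper derives Theorem \ref{C3fromtuples} directly from Theorem \ref{Ckfromtuples} with the single observation that, because the Cliffords form a group, $2$-closedness holds automatically for any conjugate tuple of Clifford gates. Your write-up merely spells out the (trivial) extraction of the individual $U_i$, $V_i$ via standard basis vectors, which the paper leaves implicit.
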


The question of whether the assumption of $k$-closedness in Theorem \ref{Ckfromtuples} for $k > 2$ (those $k$ for which $\cc_k^n$ are not groups) is actually a necessary one remains open.  
%{\red As we shall see in the next section, it is not required for diagonal gates of the Clifford hierarchy.  The proof of Theorem \ref{cgk} (in particular, the arguments of Step 1) establish:  
%
%\begin{thm}\label{conjcorrespondence}
%Diagonal gates of the $k+1$-th level of the Clifford hierarchy, up to phase, are in bijective correspondence with conjugate tuples $\{(U_i = Z_i, V_i)\}_{i \in [n]}$ of $k$-th level gates.
%\end{thm}}

Numerical investigations suggest that the assumption of $k$-closedness is not necessary for $(d,n,k)$ with $n = 1$.  We are therefore led to the following conjecture:

\begin{conj}
Gates of the $k+1$-th level of the Clifford hierarchy, up to phase, are in bijective correspondence with conjugate tuples of $k$-th level gates.
\end{conj}

An equivalent reformulation is:
\addtocounter{conj}{-1}
\begin{conj}\label{conjkclosed}
All conjugate tuples of $k$-th level gates are $k$-closed.
\end{conj}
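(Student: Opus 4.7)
The plan is to reduce Conjecture~\ref{conjkclosed} to the following single-pair statement: if $A, B \in \cc_k^n$ satisfy $AB = \omega^\alpha BA$ for some $\alpha \in \Z_d$, then $AB \in \cc_k^n$. Given such a lemma, $k$-closure of a conjugate tuple $\{(U_i, V_i)\}_{i \in [n]}$ with entries in $\cc_k^n$ follows by iterated application: each power $U_i^{p_i}$ lies in $\cc_k^n$ (apply the lemma repeatedly with $\alpha = 0$), hence the pairwise commuting products $U^{\hat{p}}$ and $V^{\hat{q}}$ do too, and finally $U^{\hat{p}} V^{\hat{q}} \in \cc_k^n$ because $U^{\hat{p}}$ and $V^{\hat{q}}$ commute up to a phase by the conjugate tuple axioms.

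I would attack the single-pair lemma by induction on $k$. The base case $k = 2$ is immediate since $\cc_2^n$ is a group. For the inductive step, unfolding the definition reduces the goal $AB \in \cc_k^n$ to showing $(AB) P (AB)^* \in \cc_{k-1}^n$ for every Pauli $P$. A direct computation using $AB = \omega^\alpha BA$ yields the symmetric identity
$$(AB) P (AB)^* = A (BPB^*) A^* = B(APA^*) B^*,$$
in which $BPB^*$ and $APA^*$ both lie in $\cc_{k-1}^n$ by the defining property of $A, B \in \cc_k^n$. The aim is to leverage this symmetry, together with the inductive hypothesis applied to a derived conjugate pair at level $k-1$, to force the common value into $\cc_{k-1}^n$.

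The central obstacle is that for $k \geq 3$, $\cc_{k-1}^n$ is not closed under conjugation by elements of $\cc_k^n$, so neither form of the above expression is manifestly in $\cc_{k-1}^n$, and any successful argument must essentially use the constraint $AB = \omega^\alpha BA$. A natural continuation is to descend recursively via Theorem~\ref{Ckfromtuples}: each $A \in \cc_k^n$ is encoded by the $(k-1)$-closed conjugate tuple $\{(AZ_iA^*, AX_iA^*)\}$, and the commutation relation between $A$ and $B$ should propagate to a compatibility condition between their associated tuples. Making this propagation precise and closing the induction is the heart of the difficulty; it appears to require either a detailed symplectic analysis in the spirit of Section~\ref{discSVN} or a finer structural result for $\cc_k^n$ than is currently available.

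As a complementary strategy for cases where additional structural information is available, one can attempt to normalise using the semi-Clifford decomposition of Definition~\ref{sc}. For triples $(d,n,k)$ in which every $k$-th level gate is semi-Clifford --- a class that includes the single-qudit case by Theorem~\ref{3rdlvlisSC} and is conjectured to be considerably broader --- write $A = C_1 D_A C_2$, so that after absorbing Cliffords via $\cc_2^n \cdot \cc_k^n \cdot \cc_2^n = \cc_k^n$ the diagonal part of the commutation constraint can be isolated. Since diagonal $k$-th level gates form a group, closure of the diagonal products is automatic; the remaining task is to track the Clifford corrections and verify they assemble into a valid Clifford factor. This route would at minimum settle the conjecture in the single-qudit regime and account for the numerical evidence cited above.
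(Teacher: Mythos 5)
There is no proof to compare against here: in the paper this statement is posed as an open conjecture (the text explicitly says the necessity of the $k$-closedness assumption in Theorem~\ref{Ckfromtuples} ``remains open''), and your proposal does not close it either. Your reduction to the single-pair lemma --- $A,B\in\cc_k^n$ with $AB=\omega^\alpha BA$ implies $AB\in\cc_k^n$ --- is logically valid as a reduction, but it replaces the conjecture with a statement that is at least as strong and equally open: taking $A=B$ it already asserts that $\cc_k^n$ is closed under powers of a single gate, which is not known for $k\ge 3$ (these levels are not groups), so nothing has been gained in difficulty. The inductive step is then exactly where you stop: the identity $(AB)P(AB)^*=A(BPB^*)A^*=B(APA^*)B^*$ is correct but, as you yourself note, conjugation by a $\cc_k^n$ element does not preserve $\cc_{k-1}^n$, and no mechanism is given by which the constraint $AB=\omega^\alpha BA$ ``forces the common value into $\cc_{k-1}^n$.'' Declaring this ``the heart of the difficulty'' is accurate, but it means the argument contains no proof of the one step that carries all the content.

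The complementary semi-Clifford route has the same character. Even in regimes where every $k$-th level gate is known or assumed semi-Clifford, writing $A=C_1D_AC_2$ and $B=C_1'D_BC_2'$ gives $AB=C_1\,D_A(C_2C_1')D_B\,C_2'$, and the interposed Clifford $C_2C_1'$ prevents you from invoking the group structure of diagonal $k$-th level gates; the identity $\cc_2^n\,\cc_k^n\,\cc_2^n=\cc_k^n$ only lets you absorb Cliffords on the outside, not between the two diagonal factors. Moreover Theorem~\ref{3rdlvlisSC} covers only $k=3$ for a single qudit, whereas the conjecture is about all $k$, so even the ``at minimum'' claim for the single-qudit regime is not established by what you have written. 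In short: the approach identifies plausible lines of attack, but both stall precisely at the unproved step, and the statement remains a conjecture both in the paper and after your proposal.
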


That is, while $\cc_k^n$ may not be a group, does it contain a copy of the Heisenberg group whenever it contains its generators?  While the conjecture may not hold in general, it certainly holds for some triples $(d,n,k)$ and it is worth asking: for which ones?.

We can use Theorem \ref{Ckfromtuples} to describe a simple algorithm for recursively enumerating all gates of the Clifford hierarchy that works for any dimension and number of qudits.

\begin{algorithm}[H]
\floatname{algorithm}{Algorithm 1:  }
\renewcommand{\thealgorithm}{}
\caption{Recursively enumerate the $\cc_k^n$ gates (up to phase)}
\label{reck}

\begin{enumerate}
\item 
Generate $[\cc_1^n]$: the $d^{2n }$ Pauli gates without phase.

\item For $k = 2$ to $\infty$:

\begin{enumerate}
\item 
Select those elements of $[\cc_{k-1}^n]$ with order $d$ for some choice of phase
\item From all pairs of these elements, select the conjugate pairs
\item From $n$-tuples of conjugate pairs, select the conjugate tuples
\item From the conjugate tuples, select the $k$-closed conjugate tuples
\item From the $k$-closed conjugate tuples, generate the $[\cc_{k}^n]$ gates using Theorem \ref{Ckfromtuples}

\end{enumerate}

\end{enumerate}

\end{algorithm} 

In step 2-a, the elements of $[\cc_{k-1}^n]$ we are interested in are those for which raising any representative to the $d$-th power gives a diagonal matrix with constant diagonal element.  A choice of phase to correct that representative yielding one with order $d$ is easily extracted from this constant.  Note that the list of $k$-closed conjugate tuples generated by step 2-d is not complete.  That is because the conjugate pairs selected in step 2-b are selected from an enumeration of $\cc_{k-1}^n$ that ignores phase.  An ordered pair $(U,V)$ is a conjugate pair if and only if, for any $p,q \in \Z_d$, $(\omega^p U, \omega^q V)$ is a conjugate pair.  Therefore, in executing step 2-e, every $k$-closed conjugate tuple found by step 2-d generates $d^{2n}$ gates of $[\cc_{k}^n]$ by introducing an arbitrary choice of discrete phase factors into the elements of the conjugate tuple.

We computed $[\cc_{k}^1]$ in the $n=1$ case for small $d$ and $k$.  In the $d=3$ case, the sizes found were: 9, 216, 1944, 7128, 22680, 69336.  In the $d=5$ case, the sizes found were: 25, 3000, 7500, 435000, 2235000.  In the $d=7$ case, the sizes found were: 49, 16464, 806736, 6338640.

In order to check $k$-closure in step 2-d, one must implement a function to determine whether a gate is in $\cc^n_k$.  This can be defined recursively by conjugating all Pauli gates and checking if they are in $\cc^n_{k-1}$.

Deeper understanding of the structure of the Clifford hierarchy can lead to efficiency gains in the practical execution of this algorithm.  For example, establishing Conjecture \ref{conjkclosed} would eliminate the need for step 2-d.  A better grip on the lifting of the projective Weil representation to the ordinary representation could aid in optimising 2-b.

\section{Diagonal gates of the Clifford hierarchy}

In this section, we give an concise, elementary proof of Cui-Gottesman-Krishna's characterisation of diagonal gates of the qudit Clifford hierarchy in the single-qudit case.  Thus, $d$ is hereafter restricted to denoting an \emph{odd, prime} dimension.  For convenience, in this section, we drop superscripts $n$ indicating the number of qudits.

\subsection{Preliminaries definitions}
\begin{defn}For $k \geq 1$ let $$\cd_k = \{D \in \cc_k \;|\; D \text{ is diagonal and } D \ket{0} = \ket{0}\}.$$
\end{defn} \noindent The second condition ensures that $\cd_k$ contains precisely one gate up to a global phase factor.

Any integer $k \geq 1$ can be uniquely expressed as $$k = (m_k-1)(d-1) + a_k$$ with $a_k \in \{1, ..., d-1\}$.  We will suppress subscripts; call $m \geq 1$ the \emph{precision} of $k$ and $a$ the \emph{degree} of $k$.

\begin{defn}\label{rkdefn}Denote by $\crr_k$ the set of \emph{rank-$k$ polynomials}: \begin{align*}
\crr_k &= \Set{ \begin{array}{l} \phi: \Z_{d^m} \to \Z_{d^m} \smallskip\\  
\phi(z) = \sum_{j=1}^{d-1} \phi_j z^j \end{array} \ | \begin{array}{l}
    \phi \text{ has degree at most } d-1, \; \phi(0) = 0\smallskip\\
    \phi_{a+1}, ..., \phi_{d-1} \equiv 0 \;\;\mathrm{ (mod }\;d) \\
  \end{array}}.
\end{align*}
\end{defn}  \noindent Note that $\crr_k$ is an additive subgroup of the group of polynomials over $\Z_{d^m}$ and thus $$\crr_k \simeq \Z_{d^{m}}^a \times \Z_{d^{m-1}}^{(d-1)-a}.$$
Each copy of $\Z_{d^{m}}$ tracks the coefficient for the terms of degree $1, ..., a$ while each copy of $\Z_{d^{m-1}}$ tracks the coefficients of the terms of degree $a+1, ..., d-1$ after having divided out by a factor of $d$.  We can see immediately that $|\crr_k| = d^k$.

\begin{remark}We will also require the notion of a polynomial $\xi$ being of \emph{rank $k$ up to a constant}, i.e. there exists $C \in \Z_{d^m}$ such that $\xi + C \in \crr_k$.  If $\phi \in \crr_k$ and $\psi: \Z_{d^m} \to \Z_{d^m}$ is a polynomial of degree at most $d-1$, then $\phi + d\psi$ is rank $k$ up to a constant with $C = -d\psi(0)$.
\end{remark}

For an integer $m \geq 1$, denote the $d^m$-th primitive root of unity: $$\omega_m = e^{\imi \,  2 \pi / d^m}.$$

We now define the sets of gates that we will prove are the diagonal $k$-th level gates.

\begin{defn}$\Delta_k = \{D[{\omega_m}^\phi] \;|\; \phi \in \crr_k \}.$
\end{defn}

\begin{remark}
Recall that we defined the construction of diagonal gates $D[f]$ for $f: \Z_d \to \C$ whereas the polynomials $\phi$ in the definition above take elements of $\Z_{d^m}$ as their inputs.  Thus, when interpreting $\phi$ acting on $\Z_d$, we are implicitly precomposing with the natural inclusion $\Z_d \xhookrightarrow{} \Z_{d^m}$.  

To appreciate the significance of this seemingly trivial point, consider the action of the translation operator $T: \C^{\Z_d} \to \C^{\Z_d}$ defined by $Tf(z) = f(z-1 \text{ (mod }d\text{)})$.  To apply $T^q$ to a polynomial $\phi$, and thus to ${\omega_m}^\phi$, we can simply substitute each instance of $z$ with $(T^q \mathbf{1})(z)$ where $\mathbf{1}(z) = z$.  If this expression is interpreted over $\Z_d$, this would be $z - q$.  However, as we are considering polynomials over $\Z_{d^m}$, we require a correction term in our substitution: $$z \mapsto (z-q) + d \, \chi_q(z)$$ where $\chi_q$ is the characteristic function of $[0,q)$.  We have thus established:
\end{remark}

\begin{lemma}\label{translate}
Suppose $\phi: \Z_{d^m} \to \Z_{d^m}$ and $q \in \Z_d$.  Then $T^q({\omega_m}^\phi)(z) = {\omega_m}^{\phi((z-q) + d \, \chi_q(z))}$.
\end{lemma}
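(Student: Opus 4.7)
The plan is to unwind both sides until the identity becomes a routine case check between two representatives of $z-q \pmod{d}$ inside $\Z_{d^m}$. The only subtlety is the implicit precomposition with the inclusion $\iota: \Z_d \hookrightarrow \Z_{d^m}$ that sends the canonical representative $z \in \{0, \ldots, d-1\}$ to itself; the lemma is essentially a bookkeeping statement about how to lift subtraction modulo $d$ to arithmetic modulo $d^m$.

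First, I would expand the left-hand side using the definitions. By definition of the translation operator on $\C^{\Z_d}$, we have $T^q({\omega_m}^{\phi})(z) = {\omega_m}^{\phi(\iota(z - q \bmod d))}$, where the reduction $z - q \bmod d$ is taken inside $\Z_d$ and then injected into $\Z_{d^m}$. For the right-hand side, the expression $(z-q) + d\,\chi_q(z)$ is evaluated directly in $\Z_{d^m}$, with $z$ and $q$ regarded as their canonical lifts in $\{0,\ldots,d-1\} \subset \Z_{d^m}$. Thus it suffices to show that $\iota(z - q \bmod d) = (z-q) + d\,\chi_q(z)$ as elements of $\Z_{d^m}$ for every $z, q \in \{0,\ldots,d-1\}$; the lemma then follows by applying $\phi$ and exponentiating.

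Next, I would dispatch this equality by the obvious case split on whether the naive subtraction $z - q$ has ``wrapped around.'' If $z \geq q$, the canonical representative of $z-q$ modulo $d$ is simply $z-q$, and $\chi_q(z) = 0$, so both sides equal $z-q$ in $\Z_{d^m}$. If $z < q$, the canonical representative modulo $d$ is $z - q + d$, while $\chi_q(z) = 1$ so the right-hand side evaluates to $(z-q) + d$; again the two sides agree. This is the entire content of the argument: the role of $d\,\chi_q(z)$ is precisely to inject the $\Z_d$-level carry back into $\Z_{d^m}$.

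I do not expect any real obstacle here; the result is essentially a definitional unravelling, and the only thing to be careful about is not to conflate the two different ambient rings $\Z_d$ and $\Z_{d^m}$ in which ``$z-q$'' might be interpreted. Once the case analysis is done, raising $\omega_m$ to the power $\phi$ applied to both sides finishes the proof.
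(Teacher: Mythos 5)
Your proposal is correct and follows essentially the same route as the paper: the paper establishes the lemma in the preceding Remark by observing that the substitution $z \mapsto (z-q) + d\,\chi_q(z)$ is the correct lift of subtraction modulo $d$ into $\Z_{d^m}$, which is exactly the case check ($z \geq q$ versus $z < q$) you carry out explicitly. Your version just spells out the bookkeeping the paper leaves implicit, and it is accurate.
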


\subsection{A simple proof of CGK's characterisation of $\cd_k$}

\begin{lemma}[Zeng-Chen-Chuang, 2008]\label{dxd}
Suppose $D_{k+1} \in \cd_{k+1}$.  There exist $D_{k} \in \cd_{k}$ and $\theta \in [0, 2 \pi)$ such that $$D_{k+1} X D_{k+1}^* = e^{\imi \, \theta} D_{k} X .$$
\end{lemma}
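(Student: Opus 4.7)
The plan is to compute $D_{k+1} X D_{k+1}^*$ explicitly using the $X$--diagonal commutation relation, then split the result into a global phase, a normalised diagonal gate, and an $X$ factor.

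First I would write $D_{k+1} = D[f]$ for some $f : \Z_d \to \C$ with $|f(z)| = 1$ for all $z$ and $f(0) = 1$ (the latter encoding $D_{k+1} \ket{0} = \ket{0}$). Since $D[f]^* = D[\bar f]$, two applications of Lemma \ref{xdiagcomm} give
\[
D_{k+1} X D_{k+1}^* \;=\; D[f] \, X \, D[\bar f] \;=\; D[f] \, D[T \bar f] \, X \;=\; D[g] \, X,
\]
where $g := f \cdot T\bar f$. This is the only explicit calculation needed.

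The key structural step is to verify that $D[g] \in \cc_k$. By the definition of $\cc_{k+1}$, the product $D[g] X = D_{k+1} X D_{k+1}^*$ lies in $\cc_k$. Since $X \in \cc_1 \subset \cc_2$ and the hierarchy satisfies $\cc_2 \cc_k \cc_2 = \cc_k$ (for $k \geq 2$; the case $k = 1$ is immediate because $\cc_1$ is itself a group), right-multiplying by $X^{-1} \in \cc_2$ yields $D[g] \in \cc_k$. I would then note that membership in $\cc_k$ is invariant under scalar phases, since conjugation kills them.

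Finally I would normalise. Setting $e^{\imi\,\theta} := g(0) = \overline{f(d-1)}$, which has unit modulus because $|f| \equiv 1$, define $D_k := e^{-\imi\,\theta} D[g]$. By construction $D_k$ is diagonal, lies in $\cc_k$ by the preceding observation, and satisfies $D_k \ket{0} = 1$, so $D_k \in \cd_k$. Rearranging gives $D_{k+1} X D_{k+1}^* = D[g] X = e^{\imi\,\theta} D_k X$, as required.

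I do not anticipate a real obstacle: the argument reduces to the commutation identity of Lemma \ref{xdiagcomm} combined with the Clifford-sandwich closure property of the hierarchy. The only mildly delicate point is the phase extraction, which is transparent given that diagonal entries of a unitary diagonal matrix have unit modulus.
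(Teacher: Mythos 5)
Your proof is correct, and it is built around the same object as the paper's: your $D[g]$ is exactly the paper's $D_k = D_{k+1}XD_{k+1}^*X^*$ before phase correction. The differences are in how the two properties of this object are justified. For diagonality, you compute $D[f]XD[\bar f] = D[f\cdot T\bar f]X$ explicitly via Lemma \ref{xdiagcomm}, while the paper argues that $D_{k+1}XD_{k+1}^*X^*$ commutes with $Z$ and invokes Lemma \ref{diagmatcommz}; these are essentially the same observation in two guises. The genuine divergence is in showing the diagonal factor lies in $\cc_k$: the paper does this from the definition of the hierarchy, by checking the conjugation action of $D_k$ on $Z$ (trivial, since $D_k$ is diagonal) and on $X$ (which equals $(D_{k+1}XD_{k+1}^*)X(D_{k+1}XD_{k+1}^*)^*$, a Pauli conjugated by a $\cc_k$ gate), whereas you right-multiply $D_{k+1}XD_{k+1}^* \in \cc_k$ by $X^* \in \cc_2$ and invoke the closure property $\cc_2\,\cc_k\,\cc_2 = \cc_k$ quoted from Zeng-Chen-Chuang in the background. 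Your route is shorter and cleaner to write down, at the cost of importing that (nontrivial) closure fact as a black box — and you correctly noticed that the quoted statement is for $k>1$ and patched the $k=1$ case using that $\cc_1$ is a group; the paper's route is marginally longer but self-contained, needing only the definition of $\cc_k$ and Pauli/phase invariance. The phase-extraction step ($e^{\imi\theta} = g(0) = \overline{f(d-1)}$) matches the paper's "correct by a phase factor" and is fine since diagonal entries of a diagonal unitary have unit modulus.
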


\begin{proof}
Let $D_{k} = D_{k+1} X D_{k+1}^*X^*$.  It is diagonal as it commutes with $Z$:  the phase acquired as a $Z$ passes through $X$ is cancelled by the one acquired as it passes through $X^*$.  Thus, $D_{k} Z D_{k}^* = Z \in \cc_{k}$.  Further, $D_{k} X D_{k}^* = (D_{k+1} X D_{k+1}^*) X (D_{k+1} X D_{k+1})^* \in \cc_{k}$.  $D_{k}$ can be corrected by a phase factor to ensure that $D_{k} \ket{0} = \ket{0}$. 
\end{proof}

\begin{thm}[Cui-Gottesman-Krishna, 2016]\label{cgk}  $\cd_k = \Delta_k$ for all $k \in \Z^+$.
\end{thm}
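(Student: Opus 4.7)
The plan is to proceed by induction on $k$, with Lemma~\ref{dxd} as the engine bridging $\cd_{k+1}$ and $\cd_k$. The base case $k = 1$ is immediate: $\crr_1$ consists of the linear polynomials $\phi_1 z$ with $\phi_1 \in \Z_d$, so $\Delta_1 = \{Z^{\phi_1}\}$, which coincides with $\cd_1$, the diagonal Paulis fixing $\ket{0}$.

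Assume $\cd_k = \Delta_k$, and let $m, m_k$ denote the precisions of $k+1, k$ respectively. For $\Delta_{k+1} \subseteq \cd_{k+1}$, take $\phi \in \crr_{k+1}$ and set $D = D[\omega_m^\phi]$; then $DZD^* = Z$ is automatic, and Lemmas~\ref{xdiagcomm} and~\ref{translate} yield
\[
DXD^* = D[\omega_m^{\Delta\phi}]\,X, \qquad \Delta\phi(z) := \phi(z) - \phi((z-1) + d\chi_1(z)),
\]
the discrete derivative of $\phi$ on $\Z_d$ lifted via Lemma~\ref{translate}. The core sublemma is that $\Delta\phi$ is rank $k$ up to an additive constant; granting this, the inductive hypothesis together with a global phase adjustment identifies $D[\omega_m^{\Delta\phi}]$ with a gate in $\cd_k \subseteq \cc_k$, so $DXD^* \in \cc_k \cdot \cc_2 \subseteq \cc_k$ using $\cc_2\cc_k\cc_2 = \cc_k$. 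For $\cd_{k+1} \subseteq \Delta_{k+1}$, apply Lemma~\ref{dxd} to extract $D_k \in \cd_k$ and a phase $e^{\imi\theta}$ with $DXD^* = e^{\imi\theta} D_k X$; by the hypothesis, $D_k = D[\omega_{m_k}^\psi]$ for some $\psi \in \crr_k$. Telescoping the recurrence $d_{z+1} = e^{\imi\theta}\omega_{m_k}^{\psi(z+1)} d_z$ from $d_0 = 1$, and imposing the closure condition $d_d = 1$ to pin down $e^{\imi\theta}$, expresses $D$ as $D[\omega_m^{\Sigma\psi(z) + \tau z}]$ for a discrete antiderivative $\Sigma\psi$ and some $\tau \in \Z_{d^m}$. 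The companion sublemma $\Sigma\psi \in \crr_{k+1}$ (up to a constant) finishes the proof.

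The main obstacle is the combinatorial core: showing that $\Delta$ and $\Sigma$ interchange $\crr_k$ and $\crr_{k+1}$ modulo additive constants. In the generic case where the degree $a$ of $k+1$ lies in $\{2, \dots, d-1\}$ one has $m_k = m$, and the claim is a transparent degree shift: $\Delta z^j$ has degree $j-1$ with leading coefficient $j$, a unit modulo $d$ for $1 \leq j \leq d-1$, so the $d$-adic valuation stratification defining $\crr_\bullet$ is preserved. The delicate case is the precision boundary $a_k = d-1 \to a_{k+1} = 1$ (where $m = m_k + 1$): antidifferentiating $z^{d-1}$ formally produces a leading term with a $\tfrac{1}{d}$ coefficient, which must nevertheless land in $\crr_{k+1}$ at the higher precision. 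This is resolved via Fermat's identity $z^d \equiv z \pmod{d}$ on $\Z_d$ inputs: the integer-valued polynomial $(z^d - z)/d$ lets one trade the formal $\tfrac{z^d}{d}$ for a well-defined polynomial with coefficients in $\Z_{d^{m}}$ satisfying the valuation constraints of $\crr_{k+1}$. Careful bookkeeping of this transition, together with the constant corrections introduced by the wrapping term $d\chi_1(z)$, constitutes the main technical work.
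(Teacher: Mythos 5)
Your inductive skeleton and your treatment of $\Delta_{k+1}\subseteq\cd_{k+1}$ track the paper's argument, but as written that inclusion has a gap: you only check $DZD^*=Z$ and $DXD^*\in\cc_k$. Since $\cc_k$ is not a group for $k\geq 3$, this does not by itself give $D\,\cc_1\,D^*\subseteq\cc_k$; you must verify $D(X^qZ^p)D^*=(DXD^*)^qZ^p\in\cc_k$ for all $q$, i.e.\ that $\prod_{j=0}^{q-1}D[T^j({\omega_m}^{\Delta\phi})]\in\cc_k$. This is fillable, because each factor $X^jD[{\omega_m}^{\Delta\phi}]X^{-j}$ is a Clifford conjugate of a $\cc_k$ gate and the diagonal gates of $\cc_k$ form a group, but it is exactly the $k$-closedness subtlety the paper is careful about (it verifies it explicitly in its counting step, and its Step 2 treats general $X^qZ^p$, which is why the translated polynomial $\phi(z)-\phi((z-q)+d\chi_q(z))$ appears with arbitrary $q$); you cannot leave it implicit.

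The more serious issue is the converse inclusion $\cd_{k+1}\subseteq\Delta_{k+1}$, where you genuinely diverge from the paper. The paper never proves an antidifferentiation statement: by Lemma \ref{dxd} and the discrete Stone--von Neumann theorem (Lemma \ref{uniqueunitary}), each $D_{k+1}\in\cd_{k+1}$ is determined by its pair $(Z,\,e^{\imi\theta}D_kX)$, there are exactly $d^k\cdot d$ admissible pairs and each occurs, so $|\cd_{k+1}|=d^{k+1}=|\Delta_{k+1}|$; combined with $\Delta_{k+1}\subseteq\cd_{k+1}$ and finiteness this closes the induction. Your route instead telescopes the recurrence for the diagonal entries, which is fine as far as it goes ($d_z=e^{\imi\theta z}\omega_{m_k}^{\sum_{j=1}^{z}\psi(j)}$ with $e^{\imi\theta}=\omega_{m_k}^{-\bar\psi}\omega^{\alpha}$), but then rests the entire hard direction on the claim that this phase-corrected discrete antiderivative is realised by a polynomial in $\crr_{k+1}$ up to a constant. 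That claim is the crux and you only sketch it: you must show that the unique degree-$(\leq d-1)$ interpolating polynomial over $\Z_{d^m}$ of these partial sums meets the $d$-adic coefficient constraints of $\crr_{k+1}$, which involves Faulhaber/Bernoulli-type denominators in every degree, not only the leading $z^d/d$ term you propose to fix with the Fermat quotient, together with the wrap-around constants; and you should also confirm that in the generic case $m=m_k$ the pinned phase really is a $d^{m_k}$-th root of unity (it is, since $\sum_{j\in\Z_d}\psi(j)\equiv-\psi_{d-1}\equiv0\pmod d$ there, but this needs saying). As it stands, the hardest half of the theorem is deferred to an unproven sublemma; either supply that bookkeeping in full or adopt the paper's counting argument, which is designed precisely to avoid it.
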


\begin{proof}

We proceed by induction on $k$.  The $k = 1$ case is straightforward as $$\cd_1 = \{Z^p \;|\; p \in \Z_d\} = \{D[\omega^{\phi_p}] \;|\; p \in \Z_d, \phi_p(z) = pz\} = \Delta_1.$$

So, let us assume that  $\cd_{k} = \Delta_{k}$ and prove that $\cd_{k+1} = \Delta_{k+1}$.  This will require two steps.  First, we will count the elements of $\cd_{k+1}$ and find that $|\cd_{k+1}| = |\Delta_{k+1}|$.  Then, we will show that $\Delta_{k+1} \subset \cd_{k+1}$.  As  $|\cd_{k+1}|$ and $|\Delta_{k+1}|$ are finite sets, this will establish their equality and complete our proof.

\smallskip
\noindent\emph{Step 1:   $|\cd_{k+1}| = |\Delta_{k+1}|$.}  

Recall that $D_{k+1} \in \cd_{k+1}$ is determined by its conjugate pair and, by the preceding lemma, $$(D_{k+1}ZD_{k+1}^*,D_{k+1}XD_{k+1}^*) = (Z, e^{\imi  \theta}D_{k} X).$$

There are $d^{k}$ possible choices for $D_{k}$ and as we shall now show, for each one, $d$ possible choices of $\theta$ such that $e^{\imi  \theta}D_{k} X$ has order $d$.  Suppose $D_{k} =  D[{\omega_m}^\phi]$.  By repeatedly applying Lemma \ref{xdiagcomm}, $$(D_{k} X)^d = (\prod_{j=0}^{d-1} D[T^j ({\omega_m}^\phi)]) X^d = D[{\omega_m}^{{\sum_{j \in \Z_d}} \phi(j)}].$$ Thus, $(e^{\imi  \theta} D_{k} X)^d = \I$ for precisely $$e^{\imi \, \theta} = e^{\imi  \theta_\alpha} := \omega_m^{-\bar{\phi}}\omega^{\alpha}$$ with $\bar{\phi} = \frac{1}{d}\sum_{j \in \Z_d} \phi(j)$, the average value of $\phi$ over $\Z_d$, and for any $\alpha \in \Z_d$.    

Each choice of $(D_{k}, \theta_\alpha)$ yields a distinct conjugate pair with $Z$ and hence an element of $\cd_{k+1}$.  This will follow from Theorem \ref{Ckfromtuples} once we establish that $(Z, e^{\imi  \theta_\alpha}D_{k} X)$ is $k$-closed: i.e. $Z^p (e^{\imi  \theta_\alpha}D_{k} X)^q \in \cc_k$ for all $(p,q) \in \Z_d^2$.  As $$Z^p \, (e^{\imi  \theta_\alpha}D_{k} X)^q = e^{\imi  q \theta_\alpha} \, Z^p \, (\prod_{j=0}^{q-1} D[T^j ({\omega_m}^\phi)]) \, X^q$$ and given that $\cc_k$ is closed under multiplication by Pauli gates and phase factors, it is sufficient to show that $\prod_{j=0}^{q-1} D[T^j ({\omega_m}^\phi)] \in \cc_k$.  Each term is the group $\{G \in \cc_k \, | \, G\text{ is diagonal}\}$  as $D[T^j ({\omega_m}^\phi)] = X^j D[{\omega_m}^\phi] X^{-j} \in \cc_k$ and our conclusion follows.

Thus, $|\cd_{k+1}| = d^k \cdot d = d^{k+1} =  |\crr_{k+1}| = |\Delta_{k+1}|$.

\smallskip
\noindent\emph{Step 2: $\Delta_{k+1} \subset \cd_{k+1}$. }

For this step, $m,a$ denote the precision and degree of $k+1$, not that of $k$.

Suppose $D[\omega_m^\phi] \in \Delta_{k+1}$ with $\phi \in \crr_{k+1}$, i.e.  $\phi(z) = \sum_{j=1}^{d-1} \phi_j z^j$ with $\phi_{a+1}, ..., \phi_{p-1} \equiv 0 \text{ (mod }d)$.    We will show that $D[\omega_m^\phi] \in \cd_{k+1}$.  It is sufficient to show that $D[{\omega_m}^\phi] ( X^{q} Z^{p} ) D[{\omega_m}^{-\phi}] \in \cc_k$ for $p, q \in \Z_d$ as every Pauli gate is of the form $X^{q} Z^{p}$ up to phase. As $$D[{\omega_m}^\phi] ( X^{q} Z^{p} ) D[{\omega_m}^{-\phi}] = D[{\omega_m}^\phi]  X^{q}   D[{\omega_m}^{-\phi}] Z^{p} = D[{\omega_m}^\phi]  D[T^q({\omega_m}^{-\phi})] ( X^{q} Z^{p} ),$$ it is sufficient to prove that  $D[{\omega_m}^\phi \cdot T^q({\omega_m}^{-\phi})] \in \cc_k$ for any $q \in \Z_d$.

By Lemma \ref{translate}, $D[{\omega_m}^\xi] = D[{\omega_m}^\phi \cdot T^q({\omega_m}^{-\phi})]$ with 
$$\xi(z) = \sum_{j=1}^{d-1} \, \phi_j \, (z^j - [(z-q) + d \, \chi_q(z)]^j) = \sum_{j=1}^{d-1} \, \phi_j \, [z^j - \sum_{\beta=0}^{j} \binom{j}{\beta} (z-q)^{j-\beta} d^\beta \, {\chi_q(z)}^\beta ] .$$
We now separate out the $\beta=1$ and higher terms of the latter inner sum and divide by their common factor of $d$.  Noting that a polynomial with $d$ prescribed values can be constructed with degree at most $d-1$, let $\psi: \Z_{d^m} \to \Z_{d^m}$ be a polynomial of degree at most $d-1$ that,  on inputs in $\Z_d \subset \Z_{d^m}$, coincides with the resulting expression:
$$\psi|_{\Z_d}(z) = \sum_{j=1}^{d-1} \, \phi_j \, [\sum_{\beta=1}^{j}  \binom{j}{\beta} (z-q)^{j-\beta} d^{\beta-1} \, {\chi_q(z)}^\beta].$$

We define
$$\xi'(z) = \sum_{j=1}^{d-1} \, \phi_j \, [z^j - (z-q)^j] \; - d \, \psi(z)$$
and note that $\xi'|_{\Z_d}(z) = \xi|_{\Z_d}(z)$ and thus that $D[{\omega_m}^{\xi'}] = D[{\omega_m}^\xi]$.  It is therefore sufficient for us to prove that $\xi'$ is a rank-$k$ polynomial up to a constant.  By the Remark following Definition \ref{rkdefn}, we may add $d \psi$ to $\xi'$ whilst preserving its rank up to a constant.  We may similarly drop all terms for $j > a$ as these $\phi_j \equiv 0 \text{ mod }d$.  The coefficient for the $z^a$ term in the resulting expression,
$$\sum_{j=1}^{a} \, \phi_j \, (z^j - (z-q)^j),$$
vanishes, and so what remains has rank $k$ up to a constant.  This implies that there exists $C \in \Z_{d^m}$ such that $\xi' + C \in \crr_k$ and, thus, $D[{\omega_m}^\phi \cdot T^q({\omega_m}^{-\phi})] = D[{\omega_m}^{\xi'}] \in \cc_k$ as ${\omega_m}^{-C} D[{\omega_m}^{\xi'}] \in \cd_k$.
\end{proof}

%We now consider the case $\phi(z) = \frac{1}{d^m} z^{p-1}$.  Recall Fermat's little theorem: for all $z \not\equiv 0$, $z^{p-1} \equiv 1$ (mod $p$).  Let $l(z) = \frac{1}{d^m}z$, $s = I[\phi] - l$, and $t = s \text{ mod } p = \frac{1}{d^m} t'$ where $t' = (d^m s \text{ mod } p^{1+m})$.  Then $D[\omega^{I[\phi]}] = D[\omega^s] D[\omega^{l}] = D[\omega^t] D[\omega^{l}]$.  Now $t'$ is a function of type $\Z_d \to \Z_{p^{1+m}}$
%$$
%\big( \sum_{j=1}^z j^{p-1} - z \text{ mod } p^{1+m} \big)  \text{ mod } p = \sum_{j=1}^z j^{p-1} - z \text{ mod } p 
%= \sum_{j=1}^z 1 - z \text{ mod } p 
%= 0 \text{ mod } p 
%$$

\section{Semi-Clifford operators and efficient gate teleportation}\label{semisclifftelesec}

\subsection{Efficient qudit gate teleportation}\label{quditteleport}

We will now construct a circuit gadget that implements a semi-Clifford third level gate using half the ancillary quantum resources as required in the original gate teleportation protocol due to Gottesman-Chuang.  It is a generalisation to the qudit case of the qubit circuit due to Zhou-Leung-Chuang \cite{zhou2000methodology}.

Suppose $G \in \semc_3$, i.e. $G = C_1 D C_2$ for $C_1,C_2 \in \cc_2$, $D \in \cd_3$.  If given access to a magic state $\ket{M} = D \ket{+}$ we can perform $G$ on an input state $\ket{\psi}$ with the following circuit:

\begin{center}

\begin{quantikz}
\ket{0}\gategroup[wires=1,steps=3, style={dashed}]{{MAGIC STATE: $\ket{M} = D\ket{+}$}}    \& \gate{H} \& \gate{D} \& \ctrl{1} \& \gate{C_1}   \& \gate{C_1 D X^* D^* {C_1}^*} \& \qw \& G\ket{\psi} \\
\ket{\psi} \& \gate{C_2} \& \gate{H^2} \& \targ{} \& \qw    \& \meter{}  \vcw{-1}
\end{quantikz}

\end{center}

\noindent Crucially, the elements of the gadget are Clifford operations meaning that they can be implemented fault-tolerantly.  Preparing the magic state can be done fault-tolerantly and with greater efficiency than in the qubit case \cite{quditmsd}.

There are three key differences that manifest only in the qudit case.  First, the need for the $H^2$ gate with the action $H^2 \ket{z} = \ket{-z}$ which is simply the identity in the qubit case.  Second, the qubit $\textsc{CNOT}$ gate is generalised to the $\textsc{CX}$ (alternatively, $\textsc{CSUM}$) gate with the action $\textsc{CX} \ket{z_1}\ket{z_2} = \ket{z_1}\ket{z_1 + z_2 \text{ (mod }d)}$.  Finally, the need for $X^*$ in the final gate which is simply $X$ in the qubit case.

The validity of this circuit is most easily demonstrated by first considering the single-qudit $X$-teleportation circuit:

\begin{center}

\begin{quantikz}
\ket{0}    \& \gate{H} \& \ctrl{1}   \& \gate{X^*} \& \qw \& \ket{\psi} \\
\ket{\psi} \& \gate{H^2} \& \targ{}    \& \meter{}  \vcw{-1}
\end{quantikz}

\end{center}
\noindent Consider the action of this circuit on an input state $\psi = \sum_j \psi_j\ket{0}\ket{j}$.  It is mapped by $H \otimes H^2$ to $\frac{1}{\sqrt{p}}\sum_{i,j} \psi_j \ket{i}\ket{-j}$ which is then mapped by $\textsc{CX}$ to $\frac{1}{\sqrt{p}}\sum_{i,j} \psi_j \ket{i}\ket{i-j}$.  A measurement outcome of $J$ on the second qudit collapses the state to $\sum_{j} \psi_j \ket{j+J}\ket{J}$.  Applying the classically-controlled correction $X^{-J}$ to the first qudit and discarding the second qudit yields $\ket{\psi}$.

For any gate $G$ that commutes with $\textsc{CX}$, we can apply $G$ at the end of the circuit to yield an output of $G\ket{\psi}$ and commute the gate $G$ backwards in time until it is absorbed into the stage of preparation of a magic state:

\begin{center}

\begin{quantikz}
\ket{0}\gategroup[wires=1,steps=3, style={dashed}]{{MAGIC STATE: $\ket{M} = G\ket{+}$}}    \& \gate{H} \& \gate{G} \& \ctrl{1}   \& \gate{GX^*G^*} \& \qw \& G\ket{\psi} \\
\ket{\psi} \& \gate{H^2} \& \qw \& \targ{}    \& \meter{}  \vcw{-1}
\end{quantikz}

\end{center}
\noindent This construction is particularly useful for $G \in \cd_3$ as $GX^*G^*$ is guaranteed to be Clifford and diagonal gates commute with the control of a controlled gate.  From this, we can generalise to implement $G \in \semc_3$ and obtain our first circuit: teleport the state $C_2 \ket{\psi}$ and apply $C_1D$ at the end of the circuit.

The above arguments are straightforwardly parallelised for $n$-qudit gates.  Zhou-Leung-Chuang show, in the qubit case, how a recursive construction can implement higher-level semi-Clifford gates with savings on ancillary resources.

\subsection{All $\cc_3^1$ gates admit efficient gate teleportation}

Before proving our main theorem, we will give an alternative characterisation of semi-Clifford gates in terms of their action on elementary Pauli gates.  This characterisation is a mild strengthening of Proposition 5 of \citep{zeng2008semi} that enables more efficient computations and simpler analytic proofs.  Note that Zeng-Chen-Chuang's proofs that, in the qubit case, $\cc_k^1 = \semc^1_k$ and  $\cc_k^2 = \semc^2_k$ make use of exhaustive computations.

\begin{defn}A \emph{Lagrangian semibasis} of a symplectic vector space of dimension $2n$ is a linearly independent set of $n$ vectors $\{v_1, ..., v_n\}$ satisfying $[v_i, v_j] = 0$ for all $i,j \in [n]$.

\end{defn}

\begin{lemma}
For any Lagrangian semibasis $\{(\hat{p_i},\hat{q_i})\}_{i \in [n]} = \{(p_1, q_1), ..., (p_n, q_n)\} \subset \Z_d^{2n}$, there is a Clifford $C \in \cc^n_2$ such that $CZ_iC^* = W(\hat{p_i} ,\hat{q_i} )$ for all $i \in [n]$.
\end{lemma}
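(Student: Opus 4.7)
My strategy is to complete the given Lagrangian semibasis to a full symplectic basis of $\Z_d^{2n}$, then invoke Lemma~\ref{uniqueunitarytuple} (the discrete Stone-von Neumann theorem) to produce a unitary $C$ with the required conjugation action on the $Z_i$, and finally observe that $C$ is automatically Clifford.

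\emph{Step 1: Symplectic basis extension.} Writing $v_i := (\hat{p_i},\hat{q_i})$, I would extend $\{v_i\}_{i \in [n]}$ to a symplectic basis $\{v_i, v_i'\}_{i \in [n]}$ of $\Z_d^{2n}$, meaning $[v_i, v_j'] = \delta_{ij}$ and $[v_i', v_j'] = 0$ for all $i,j \in [n]$. This is standard symplectic linear algebra over the field $\Z_d$: since the form is nondegenerate and the $v_i$ are linearly independent, the functionals $[\,\cdot\,, v_i]$ are independent, so a dual family modulo $\mathrm{span}(v_i)$ exists; a symplectic Gram-Schmidt step then eliminates residual pairings amongst the $v_i'$. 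This is essentially Witt's extension theorem applied to the inclusion of the standard Lagrangian $\mathrm{span}\{e_i\} \hookrightarrow \Z_d^{2n}$.

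\emph{Step 2: Conjugate tuple of Paulis.} Set $U_i := W(v_i) = W(\hat{p_i},\hat{q_i})$ and $V_i := W(v_i')$. The multiplication law $W(v)W(w) = \omega^{2^{-1}[v,w]}W(v+w)$ yields the commutation identity $W(v)W(w) = \omega^{[v,w]}W(w)W(v)$, and the symplectic basis relations translate directly into the conjugate-tuple axioms: $U_iV_i = \omega V_iU_i$ within each pair, and elements from distinct pairs commute. Iterating the multiplication law gives $W(v)^d = W(dv) = \I$ (the accumulated phase vanishes because $[v,v]=0$), so each $U_i, V_i$ has order $d$.

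\emph{Step 3: Stone-von Neumann and Clifford membership.} Lemma~\ref{uniqueunitarytuple} now supplies a unitary $C$, unique up to phase, with $CZ_iC^* = U_i = W(\hat{p_i},\hat{q_i})$ and $CX_iC^* = V_i$ for all $i$. Since $C$ conjugates the generators $Z_i, X_i$ of the Pauli group into $\cc_1^n$, and $\cc_1^n$ is closed under products and scalar phases, we obtain $C\cc_1^n C^* \subset \cc_1^n$, i.e. $C \in \cc_2^n$.

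The only genuinely non-trivial step is the symplectic basis extension, but it is entirely classical and valid over any field, in particular $\Z_d$ for odd prime $d$; the remaining steps are a direct transcription of the symplectic combinatorics into commutation relations among Pauli gates, followed by a single application of the Stone-von Neumann lemma already established in the paper.
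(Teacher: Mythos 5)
Your proof is correct, and it takes a mildly but genuinely different route from the paper's. Both arguments share Step 1 verbatim: extend the Lagrangian semibasis to a full symplectic basis of $\Z_d^{2n}$ (the paper also leaves this as standard symplectic linear algebra). Where you diverge is in how the Clifford is produced. The paper simply invokes the Weil/metaplectic correspondence $\rho$ between affine symplectic transformations and projective Cliffords: the symplectic map sending the standard basis to the extended basis is realised by some Clifford $C$, which then has the desired action on the $Z_i$. You instead build the conjugate tuple $\{(W(v_i), W(v_i'))\}_{i\in[n]}$ of Weyl operators directly, verify the conjugate-tuple axioms from the multiplication law (your order-$d$ and commutation computations are right, using that $d$ is an odd prime so $2^{-1}$ exists), and then apply Lemma~\ref{uniqueunitarytuple} to obtain $C$, deducing Cliffordness from the fact that $C$ maps the Pauli generators into $\cc_1^n$. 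Your version is more self-contained relative to the machinery the paper has already developed in Section~\ref{discSVN} and avoids treating the surjectivity of the metaplectic representation as a black box; the paper's version is shorter and makes the symplectic-covariance picture explicit, which it reuses elsewhere. Either argument is acceptable; note only that your final step should (and implicitly does) use that products of Weyl operators stay inside $\cc_1^n$ up to powers of $\omega$, which holds because $d$ is odd.
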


\begin{proof}

A Lagrangian semibasis can be extended to a symplectic basis of $\Z_d^{2n}$.  The Clifford $C$ arising from the symplectic transformation that maps the standard basis to this symplectic basis yields the desired action.
\end{proof}

\begin{lemma}\label{customcliff}
For any set  $\{P_1, ..., P_n\}$ of $n$ independent, commuting Pauli gates, there is a Clifford $C \in \cc^n_2$ such that $CZ_iC^* = P_i$ for all $i \in [n]$.
\end{lemma}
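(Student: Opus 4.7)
My plan is to reduce to the preceding lemma and then correct phases with a Pauli prefactor. Write each commuting, independent Pauli as $P_i = \omega^{c_i} W(\hat{p_i}, \hat{q_i})$ for some $c_i \in \Z_d$ and $(\hat{p_i}, \hat{q_i}) \in \Z_d^{2n}$. The independence hypothesis on the $P_i$, by the characterisation recalled in the symplectic-formalism subsection, is equivalent to the linear independence of the $(\hat{p_i}, \hat{q_i})$ in $\Z_d^{2n}$; the commutativity of the $P_i$ is equivalent, via the multiplication law for the $W(\cdot,\cdot)$, to $[(\hat{p_i}, \hat{q_i}), (\hat{p_j}, \hat{q_j})] = 0$ for all $i,j$. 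Thus $\{(\hat{p_i}, \hat{q_i})\}_{i \in [n]}$ is a Lagrangian semibasis, and the previous lemma produces a Clifford $C' \in \cc_2^n$ with $C' Z_i C'^* = W(\hat{p_i}, \hat{q_i})$.

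The remaining task is to install the phase factors $\omega^{c_i}$. The natural move is to set $C = W(\hat{r}, \hat{s}) \, C'$ for a suitably chosen Pauli prefactor; then conjugation by $W(\hat{r}, \hat{s})$ transforms $W(\hat{p_i}, \hat{q_i})$ to $\omega^{[(\hat{r},\hat{s}), (\hat{p_i}, \hat{q_i})]} W(\hat{p_i}, \hat{q_i})$ by the multiplication law. So I need to solve the linear system
\[
[(\hat{r},\hat{s}), (\hat{p_i}, \hat{q_i})] = c_i \quad \text{for all } i \in [n]
\]
for some $(\hat{r}, \hat{s}) \in \Z_d^{2n}$. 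This is solvable because the linear map $\Z_d^{2n} \to \Z_d^n$ given by $(\hat{r}, \hat{s}) \mapsto ([(\hat{r},\hat{s}), (\hat{p_i}, \hat{q_i})])_{i \in [n]}$ is surjective: its transpose pairs the standard basis of $\Z_d^n$ against the $(\hat{p_i}, \hat{q_i})$ under the nondegenerate symplectic form, and so has trivial kernel by the linear independence of the semibasis.

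With such $(\hat{r}, \hat{s})$ chosen and $C := W(\hat{r}, \hat{s}) \, C'$, we obtain $C Z_i C^* = \omega^{c_i} W(\hat{p_i}, \hat{q_i}) = P_i$ for every $i$, as required. Since $C'$ is Clifford and Pauli gates lie in $\cc_2^n$, the product $C$ is Clifford, completing the argument.

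The only mildly delicate step is the surjectivity claim for the phase-fitting system; everything else is bookkeeping with the symplectic formalism already set up in the paper. I expect no serious obstacle.
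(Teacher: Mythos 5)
Your proof is correct, and it shares the paper's overall strategy: decompose $P_i = \omega^{c_i} W(\hat{p_i},\hat{q_i})$, observe that independence and commutativity of the $P_i$ translate into $\{(\hat{p_i},\hat{q_i})\}_{i\in[n]}$ being a Lagrangian semibasis, and invoke the preceding lemma to get a Clifford $C'$ with $C' Z_i C'^* = W(\hat{p_i},\hat{q_i})$. Where you diverge is in how the phases $\omega^{c_i}$ are installed. The paper multiplies on the \emph{right}, taking $C = C' X_1^{x_1}\cdots X_n^{x_n}$: since $X_i^{x_i}$ conjugates $Z_i$ to a power of $\omega$ times $Z_i$ and commutes with the other $Z_j$, the phase correction decouples coordinate by coordinate and there is nothing to solve. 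You multiply on the \emph{left} by a general Pauli $W(\hat{r},\hat{s})$, which forces you to solve the linear system $[(\hat{r},\hat{s}),(\hat{p_i},\hat{q_i})] = c_i$ for all $i$; your surjectivity argument for that system is sound — if the map $(\hat{r},\hat{s}) \mapsto ([(\hat{r},\hat{s}),(\hat{p_i},\hat{q_i})])_{i\in[n]}$ were not surjective, there would be a nonzero $\lambda \in \Z_d^n$ with $[(\hat{r},\hat{s}), \sum_i \lambda_i (\hat{p_i},\hat{q_i})] = 0$ for every $(\hat{r},\hat{s})$, so nondegeneracy of the symplectic form would force $\sum_i \lambda_i (\hat{p_i},\hat{q_i}) = 0$, contradicting linear independence — and your conjugation identity $W(\hat{r},\hat{s}) W(\hat{p_i},\hat{q_i}) W(\hat{r},\hat{s})^* = \omega^{[(\hat{r},\hat{s}),(\hat{p_i},\hat{q_i})]} W(\hat{p_i},\hat{q_i})$ follows from the stated multiplication law. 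So both routes are valid: the paper's right-hand correction is the more economical, while yours costs a small linear-algebra step but makes the role of nondegeneracy of the symplectic form explicit.
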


\begin{proof}There exists a Lagrangian semibasis $\{(\hat{p_i},\hat{q_i})\}_{i \in [n]}$ and  $x_i \in \Z_d$ such that each Pauli $P_i = \omega^{x_i} W(\hat{p_i} ,\hat{q_i} )$.  Find $C \in \cc_2$ from the previous lemma such that $CZ_iC^* = W(\hat{p_i} ,\hat{q_i} )$.  Then, $C' = C {X_1}^{x_1} \cdots {X_n}^{x_n}$ yields the desired action.
\end{proof}

\begin{thm}\label{semicliff}
Suppose $G \in \cc_k^n$ and denote by $U_i = G{Z_i}G^*, V_i = G{X_i}G^*$ elements of $\cc_{k-1}^{n}$.  $G$ is semi-Clifford if and only if there exists a Lagrangian semibasis $\{(\hat{p_i}, \hat{q_i})\}_{i \in [n]} \subset \Z^{2n}_d$ such that, for each $i \in [n]$, $U^{\hat{p_i}} V^{\hat{q_i}}$ is a Pauli gate.
\end{thm}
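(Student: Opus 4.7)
\emph{Proof plan.}  I would prove both directions by exhibiting a commuting independent family of Paulis that $G$ conjugates to Paulis, and converting between such a family and a semi-Clifford decomposition of $G$ via Lemma \ref{customcliff} and Lemma \ref{diagmatcommz}.

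For the forward direction, suppose $G = C_1 D C_2$ with $C_1, C_2 \in \cc_2^n$ and $D$ diagonal.  The natural candidates for the semibasis are the preimages under $C_2$ of the $Z_i$: define the Paulis $P_i = C_2^* Z_i C_2$.  These are independent and mutually commuting (since the $Z_i$ are), so the Weyl correspondence yields a Lagrangian semibasis $\{(\hat{p_i},\hat{q_i})\}_{i \in [n]}$ with $P_i$ equal, up to phase, to $Z^{\hat{p_i}} X^{\hat{q_i}}$.  Then
$$G P_i G^* = C_1 D (C_2 C_2^* Z_i C_2 C_2^*) D^* C_1^* = C_1 (D Z_i D^*) C_1^* = C_1 Z_i C_1^*,$$
where the crucial collapse $D Z_i D^* = Z_i$ uses that $D$ is diagonal.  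The right-hand side is a Pauli, and the left-hand side is $U^{\hat{p_i}} V^{\hat{q_i}}$ up to the same overall phase, so it is a Pauli.

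For the converse, suppose the Lagrangian semibasis exists.  Set $\tilde P_i := U^{\hat{p_i}} V^{\hat{q_i}} = G (Z^{\hat{p_i}} X^{\hat{q_i}}) G^*$ (this identity holds because $U_j, V_j$ inherit the Heisenberg relations of $Z_j, X_j$ by conjugation by $G$, independently for each factor).  By hypothesis each $\tilde P_i$ is a Pauli, and the family is independent and mutually commuting because the Lagrangian semibasis is.  Apply Lemma \ref{customcliff} twice: once to obtain $C_1 \in \cc_2^n$ with $C_1 Z_i C_1^* = \tilde P_i$, and once to obtain $C_2 \in \cc_2^n$ with $C_2 Z_i C_2^* = Z^{\hat{p_i}} X^{\hat{q_i}}$.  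Consider $M := C_1^* G C_2$.  A direct computation shows
$$M Z_i M^* = C_1^* G (Z^{\hat{p_i}} X^{\hat{q_i}}) G^* C_1 = C_1^* \tilde P_i C_1 = Z_i,$$
so $M$ commutes with each $Z_i$.  By Lemma \ref{diagmatcommz}, $M$ is diagonal; writing $D = M$ gives $G = C_1 D C_2^*$, exhibiting $G$ as semi-Clifford.

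The main obstacle is not conceptual but bookkeeping: verifying that the identifications between the Paulis $\tilde P_i$, the abstract vectors $(\hat{p_i}, \hat{q_i})$, and the ``canonical'' Paulis $Z^{\hat{p_i}} X^{\hat{q_i}}$ are compatible once phases are tracked, and that Lemma \ref{customcliff} may indeed be applied to both families.  Independence of the $\tilde P_i$ transfers from independence of the $Z^{\hat{p_i}} X^{\hat{q_i}}$ via conjugation by $G$, and their commutativity is an immediate consequence of the Lagrangian condition $[(\hat{p_i}, \hat{q_i}), (\hat{p_j}, \hat{q_j})] = 0$ together with the Weyl multiplication law.  With those points settled, the argument reduces cleanly to the two lemmas cited above.
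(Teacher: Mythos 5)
Your proposal is correct and follows essentially the same route as the paper: the paper's chain of equivalences is exactly your two directions unrolled, with Lemma \ref{customcliff} used to build the two Clifford corrections and Lemma \ref{diagmatcommz} used to conclude that $C_1^* G C_2$ is diagonal. The only difference is presentational (explicit forward/backward arguments versus a single chain of biconditionals), and your phase/independence bookkeeping matches what the paper leaves implicit.
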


\begin{proof}For notational simplicity, $\cc_k^n, \cd_k^n, \semc_k^n$ shall be denoted by $\cc_k, \cd_k, \semc_k$.
\begin{align*}
G \in \semc_k &\overset{(1)}{\iff} \exists\, C_1, C_2 \in \cc_2, D \in \cd_k \text { s.t. }G = C_1 D C_2 \\
& \overset{(2)}{\iff} \exists\,  C_1, C_2 \in \cc_2  \text { s.t. }C_1 G C_2 \in \cd_k \\
& \overset{(3)}{\iff} \exists\,  C_1, C_2 \in \cc_2  \text { s.t. } \forall\, i \in [n], C_1 G C_2 Z_i C_2^* G^* C_1^* = Z_i  \\
& \overset{(4)}{\iff} \exists\,  C_2 \in \cc_2  \text { s.t. } \{G C_2 Z_i C_2^* G^* \}_{i \in [n]} \text{ are $n$ independent, commuting Pauli gates}  \\
& \overset{(5)}{\iff} \exists \text{ a Lagrangian semibasis }  \{(\hat{p_i}, \hat{q_i})\}_{i \in [n]} \subset \Z^{2n}_p  \text{ s.t. } \forall\, i \in [n], U^{\hat{p_i}} V^{\hat{q_i}} \text{ is a Pauli gate.}
\end{align*}

The first two equivalences are straightforward and the third equivalence follows from Lemma \ref{diagmatcommz}: the fact that a matrix is diagonal if and only if it commutes with $Z_i$ for all $i \in [n]$.

For the direction $\overset{(4)}{\implies}$, we note that as the $Z_i$ are independent and commuting Paulis, so are $C_1^* Z_i C_1$.  The converse follows by applying Lemma \ref{customcliff} to construct $C_1^*$ and hence $C_1$.

For the direction $\overset{(5)}{\implies}$, the Lagrangian semibasis arises from the images $C_2Z_iC_2^* = \omega^{x_i} W(\hat{p_i} ,\hat{q_i} )$.  Conversely, we again apply Lemma \ref{customcliff} to construct $C_2$.\end{proof}

In addition to providing a useful technical characterisation, employed below, it can be used to algorithmically find the Cliffords that diagonalise a given semi-Clifford gate in a relatively efficient manner.

\begin{algorithm}[H]
\floatname{algorithm}{Algorithm 2:  }
\renewcommand{\thealgorithm}{}
\caption{Recognise and diagonalise semi-Clifford gates}
\label{recdiagsc}

\begin{enumerate}
\item Check if the given gate $G$ is in $\cc_k^n$; terminate if it is not.  Otherwise, store $U_i = GZ_iG^*$, $V_i = GX_iG^*$.

\item 
For each Lagrangian subspace $L$ of $\Z^{2n}_d$:

\begin{enumerate}
\item 
Choose a Lagrangian semibasis  $\{(\hat{p_i}, \hat{q_i})\}_{i \in [n]}$ of $L$

\item 
For each $i \in [n]$:
\begin{enumerate}
\item Check if $U^{\hat{p_i}} V^{\hat{q_i}}$ is a Pauli gate; return to step 2 if it is not.  
\end{enumerate}

\item If a Lagrangian semibasis satisfies the criterion of Theorem \ref{semicliff} store it and go to step 4.

\end{enumerate}

\item If no Lagrangian semibasis satisfies the criterion of Theorem \ref{semicliff}, terminate.

\item Construct $C_2$ as a Clifford satisfying $C_2 Z_i C_2^* = Z^{\hat{p_i}} X^{\hat{q_i}}$ using Lemma \ref{customcliff}.

\item Construct $C_1^*$ as a Clifford satisfying $C_1^* Z_i C_1 = U^{\hat{p_i}} V^{\hat{q_i}}$ using Lemma \ref{customcliff}.

\item Return $C_1$ and $C_2$.

\end{enumerate}

\end{algorithm} 

One simple way of generating a list of Lagrangian semibases, one for each Lagrangian subspace, is to first construct the list of vectors in $\Z_d^{2n}$ with leading nonzero component equal to 1 and to select from this list those subsets of size $n$ with pairwise vanishing symplectic product.

\begin{thm}\label{3rdlvlisSC}
Every third level gate of one qudit (of any prime dimension) is semi-Clifford: $\semc_3^1 = \cc_3^1$.
\end{thm}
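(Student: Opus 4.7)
The plan is to apply Theorem \ref{semicliff} in the case $n = 1$, where a Lagrangian semibasis is just a single nonzero vector. For any $G \in \cc_3^1$, it therefore suffices to produce one pair $(p, q) \in \Z_d^2 \setminus \{(0,0)\}$ such that $U^p V^q$ is a Pauli gate, where $U = GZG^*$ and $V = GXG^*$ lie in $\cc_2^1$.

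I would work inside the Weil representation $[\cc_2^1] \cong SL(2, \Z_d) \ltimes \Z_d^2$ (valid for odd $d$), letting $(A, a)$ and $(B, b)$ be the affine symplectic transformations corresponding to $U$ and $V$. A Clifford projects to a Pauli exactly when its symplectic part is the identity, so the question reduces to finding $(p, q) \neq (0,0)$ with $A^p B^q = \I$ in $SL(2, \Z_d)$. The relations $UV = \omega VU$ and $U^d = V^d = \I$ descend to $AB = BA$ together with $A^d = B^d = \I$.

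The key structural observation is a normal form for $A$ and $B$. In characteristic $d$ the Frobenius identity gives $(A - \I)^d = A^d - \I = 0$, so the minimal polynomial of $N := A - \I$ divides $X^d$; since $A$ is $2 \times 2$ that polynomial has degree at most $2$, forcing $N^2 = 0$. The same argument yields $M := B - \I$ with $M^2 = 0$, while $AB = BA$ gives $NM = MN$. Expanding $A^p B^q = (\I + pN)(\I + qM) = \I + pN + qM + pq\, NM$ reduces the problem to finding $(p, q) \neq (0,0)$ that make this sum vanish.

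A short case analysis finishes the argument. If $N = 0$ take $(p, q) = (1, 0)$; if $M = 0$ take $(p, q) = (0, 1)$. If both $N, M$ are nonzero, each is a rank-one nilpotent on $\F_d^2$; picking a basis in which $N$ is the standard $2 \times 2$ Jordan block, commutativity with $M$ together with $M^2 = 0$ forces $M = \lambda N$ for some $\lambda \in \F_d$. Then $A^p B^q = \I + (p + \lambda q) N$, and $(p, q) = (-\lambda, 1)$ suffices. The main obstacle is not in the algebra but in coverage: the symplectic formalism invoked above is developed in the paper only for odd primes, so the statement for $d = 2$ either has to be imported from Zeng-Chen-Chuang's original qubit theorem or the nilpotence argument must be adapted to the standard qubit Clifford/symplectic correspondence directly.
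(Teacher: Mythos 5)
Your proof is correct, and it shares the paper's setup but finishes by a genuinely different argument. Both you and the paper begin the same way: invoke Theorem \ref{semicliff} with $n=1$ (so a Lagrangian semibasis is a single nonzero vector), pass to the Weil correspondence $[\cc_2^1]\cong Sp(1,\Z_d)\ltimes\Z_d^2$, and observe that the symplectic parts $A,B$ of $U=GZG^*$, $V=GXG^*$ commute and satisfy $A^d=B^d=\I$, so that the task is to find $(p,q)\neq(0,0)$ with $A^pB^q=\I$. The paper then finishes with a pure counting argument: $(p,q)\mapsto A^pB^q$ is a homomorphism $\Z_d^2\to Sp(1,\Z_d)$, and since $|Sp(1,\Z_d)|=d(d^2-1)$ is not divisible by $d^2$ the map has nontrivial kernel. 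You instead prove a normal form: $(A-\I)^d=A^d-\I=0$ and the $2\times 2$ bound force $N=A-\I$ and $M=B-\I$ to square to zero, and commutativity (via the commutant of a Jordan block) forces $M=\lambda N$ when both are nonzero, yielding the explicit kernel vector $(p,q)=(-\lambda,1)$. The paper's route is shorter and avoids any matrix analysis; yours is constructive (it exhibits the pair $(p,q)$, which feeds directly into Algorithm 2 and the diagonalising Cliffords) and exposes the structural fact that two commuting order-$d$ elements of $SL(2,\Z_d)$ lie in a common unipotent one-parameter subgroup -- though note that, like the paper's Lagrange argument, the $2\times 2$ minimal-polynomial step does not extend as stated to $n>1$. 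Your caveat about $d=2$ is fair but not a defect specific to your write-up: the paper's own proof uses the same odd-prime symplectic formalism (the section on diagonal gates even restricts $d$ to odd primes "hereafter"), and the qubit case is exactly the original Zeng--Chen--Chuang result, so appealing to it there is legitimate.
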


\begin{proof}
Suppose $G \in \cc_3^1$ is a one-qudit third level gate and let $U = GZG^*$, $V = GXG^*$ be its conjugate pair of Cliffords.  Let $(S,v), (T,w) \in Sp(1, \Z_d) \ltimes \Z_d^{2}$ be such that $$\mu(S,v) = [U] \text{ and } \mu(T,w) = [V].$$  As $U V = \omega V U$, it follows that $[U][V] = [V][U]$ and thus that $S, T$ commute in $Sp(1, \Z_d)$.  We can thus define a group homomorphism: $$\phi_G: \Z^2_d \to Sp(1, \Z_d) := (p,q) \mapsto S^pT^q.$$  The order of $Sp(1, \Z_d)$ is $d(d^2 - 1)$ and, as this is not divisible by $|\Z_d^2| = d^2$, $\phi_G$ cannot be injective.  Therefore, there exists a nonzero vector $(p,q)$, i.e. a Lagrangian semibasis for $\Z^2_d$, such that $S^pT^q$ is the identity from which we can conclude that $U^pV^q$ is a Pauli gate.  By the previous theorem, $G \in \semc_3^1$.
\end{proof}

Numerical evidence supports the following conjecture:

\begin{conj}
Every $k$-th level gate of one qudit (of any prime dimension) is semi-Clifford: $\semc_k^1 = \cc_k^1$.
\end{conj}

\subsection{All two-qutrit $\cc_3^2$ gates admit efficient gate teleportation}

We now show that all third level gates of two-qutrits are semi-Clifford.  First, we establish general lemmas enabling reduction to the case of gates whose conjugate tuples consist of Cliffords of the form $D X^{\hat{\alpha}}$ for $D \in \cd_3^2$, $\hat{\alpha} \in \Z_d^2$ up to phase.

Denote by $Q \subset Sp(n,\Z_d)$ the abelian subgroup of symplectic matrices of the form $\begin{pmatrix}
\I & 0 \\
b & \I \end{pmatrix}$ for $n \times n$ symmetric matrices $b$.  Note that by counting the symmetric matrices, $|Q| = d^{n(n+1)/2}$.  Under the explicit representation of $Sp(n,\Z_d)$ of \citep{neuhauser2002explicit}, the image of $Q$ is a subgroup of diagonal Cliffords; each one is, up to phase, of the form $D[\omega^\phi]$ for $\phi: \Z_d^n \to \Z_d$ a homogeneous quadratic polynomial.
\begin{lemma}
Suppose that $B \subset Sp(n,\Z_d)$ is a set of commuting symplectic matrices such that $S^d = \I$ for all $S \in B$.  There exists $R \in Sp(n,\Z_d)$ such that $R\langle B \rangle R^{-1} \subset Q$.
\end{lemma}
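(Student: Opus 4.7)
My plan is to reduce the claim to finding a Lagrangian subspace $L \subset \Z_d^{2n}$ fixed pointwise by every element of $\langle B \rangle$. Note that $Q$ is precisely the pointwise stabiliser of the standard Lagrangian $L_0 = \{0\} \times \Z_d^n$: a short block calculation shows that any symplectic $S$ with $Sv = v$ for all $v \in L_0$ must have the form $\begin{pmatrix} \I & 0 \\ b & \I \end{pmatrix}$ with $b$ symmetric. Thus, given such an $L$, any $R \in Sp(n,\Z_d)$ with $R^{-1}L_0 = L$ satisfies $R \langle B \rangle R^{-1} \subset Q$.

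To produce $L$, I would examine the common fixed subspace $F = \bigcap_{S \in \langle B \rangle} \ker(S - \I)$ and show that $F$ is coisotropic, i.e.\ $F^\perp \subset F$; then $F/F^\perp$ is a symplectic space, and any Lagrangian there lifts back to a Lagrangian $L \subset F$ which is automatically pointwise-fixed. The key identity $\ker(S - \I)^\perp = \mathrm{im}(S - \I)$ for every $S \in Sp(n, \Z_d)$ follows from $[v, (S - \I)w] = [Sv, Sw] - [v, w] = 0$ whenever $Sv = v$, together with a dimension count. It then yields $F^\perp = \sum_{S \in \langle B \rangle} \mathrm{im}(S - \I)$, so coisotropy reduces to proving $(T - \I)(S - \I) = 0$ for all $S, T \in \langle B \rangle$.

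The crux is thus showing $XY = 0$ where $X = S - \I$ and $Y = T - \I$ for commuting $S, T \in B$. From $S^d = T^d = \I$ and the characteristic-$d$ identity $(\I + X)^d = \I + X^d$ one has $X^d = Y^d = 0$, while commutation gives $[X, Y] = 0$. In the case of interest $n = 2$, the only Jordan types of a nilpotent symplectic element with $X^d = 0$ are $(2, 2)$, $(2, 1, 1)$, and $(1^4)$, all forcing $X^2 = Y^2 = 0$ and placing $X, Y$ in the symplectic Lie algebra. I would then decompose $V = V_0 \oplus V_1 \oplus V_2$ with $V_1 = \mathrm{im}(X)$, $V_0$ a symplectic complement of $V_1$ inside $\ker(X)$, and $V_2$ a Lagrangian complement of $\ker(X)$; in these coordinates $X$ carries $V_2$ isomorphically onto $V_1$ and vanishes elsewhere. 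Writing $Y$ in this block decomposition and applying the constraints $[X, Y] = 0$, $Y$ symplectic, and $Y^2 = 0$ forces the restriction of $Y$ to $V_1$ (under the identification with $V_2$ via $X$) to be an antisymmetric square-zero matrix. For $n = 2$ this space is one-dimensional, and its nonzero elements have nonzero square, so the restriction must vanish, giving $XY = 0$. The main obstacle is this final antisymmetric-square-zero step, whose argument relies on the low-dimensional symplectic Jordan theory; once pairwise vanishing on generators is in hand, it extends to every word in $\langle B \rangle$ by a routine induction, establishing the required coisotropy of $F$ and hence the desired Lagrangian $L$.
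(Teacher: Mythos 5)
Your route is genuinely different from the paper's. The paper embeds $\langle B \rangle$ in a Sylow $d$-subgroup of $Sp(n,\Z_d)$ and invokes Barry's theorem on the unique abelian subgroup of maximal order $d^{n(n+1)/2}$ inside such a Sylow subgroup, then transports it to $Q$ by conjugacy of Sylow subgroups. Your reduction is symplectic-geometric and, as far as it goes, correct and in fact sharper: $Q$ is indeed the pointwise stabiliser of the standard Lagrangian, the identity $\ker(S-\I)^{\perp}=\mathrm{im}(S-\I)$ holds for symplectic $S$, and so $\langle B\rangle$ is conjugate into $Q$ \emph{if and only if} $(T-\I)(S-\I)=0$ for all $S,T\in\langle B\rangle$; the passage from generators to arbitrary words is routine once all pairwise products of the $X_i=S_i-\I$ vanish. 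This exact criterion is a real advantage over the paper's argument, which it makes more transparent.

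The gap is in your crux step, and it is twofold. First, the Jordan-type claim is wrong as stated: for $n=2$ and $d\geq 5$ the regular unipotent class of type $(4)$ has $S^{d}=\I$ (so $X^{d}=0$) yet $X^{2}\neq 0$, so the list $(2,2),(2,1,1),(1^{4})$ is valid only for $d=3$, and you say nothing about $n>2$. Second, and more importantly, no argument can close this gap, because the lemma in the stated generality is false: every element of $Q$ satisfies $(g-\I)^{2}=0$, so the singleton $B=\{S\}$ with $S$ a regular unipotent of $Sp(2,\Z_{5})$ (order $5$, $(S-\I)^{2}\neq0$) cannot be conjugated into $Q$; for larger $n$ one can even build commuting pairs with $X^{2}=Y^{2}=0$ but $XY\neq0$ from a nonzero antisymmetric square-zero block. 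Your criterion makes this failure visible, and it also exposes the corresponding soft spot in the paper's own proof, which slides from ``some inclusion-maximal abelian subgroup of the Sylow subgroup containing $\langle B\rangle$'' to ``the unique abelian subgroup of maximal order.'' For the case the paper actually needs (two qutrits: $n=2$, $d=3$, via Lemma \ref{diagsymp}) your approach does go through: order-$3$ elements then have $X^{2}=0$; when $\mathrm{im}(X)$ is Lagrangian your block analysis works (the one-dimensional space of $2\times2$ antisymmetric matrices contains no nonzero square-zero element); and when $X$ has rank one, commutation forces $Y$ to preserve $\ker X$ and to act on $V/\ker X$ by a scalar, which nilpotency of $Y$ kills, again giving $XY=0$ --- a case your sketch does not explicitly cover. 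So: a correct and genuinely different strategy that completes the proof for the parameters that matter, but you must restate the lemma with honest hypotheses (e.g.\ $n\le2$ together with $(S-\I)^{2}=0$ for all $S\in\langle B\rangle$, which for $n=2$ holds automatically when $d=3$) rather than for all $(n,d)$.
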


\begin{proof}
The subgroup $\langle B \rangle$ generated by $B$ is either trivial (in which case, the lemma follows immediately) or it is a $d$-subgroup of $Sp(n,\Z_d)$.  Therefore, $\langle B \rangle \subset A \subset J$ where $J$ is a Sylow $d$-subgroup of $Sp(n,\Z_d)$ and $A$ is a maximal abelian subgroup of $G$.  

By Theorem 2.5 of \cite{barry1979large}, the Sylow $d$-subgroups of $Sp(n,\Z_d)$ contain a \emph{unique} maximal abelian subgroup of order $d^{n(n+1)/2}$.  Note that non-identity every element of $Q$ has order $d$ and so $Q$ is contained in a Sylow $d$-subgroup $K$ of$Sp(n,\Z_d)$.  As all Sylow $d$-subgroups are conjugate, there exists $R \in Sp(n,\Z_d)$ such that $RJR^{-1} = K$.  

As $Q$ has order $d^{n(n+1)/2}$, it is the maximal abelian subgroup of $K$.  Conjugation by $R$ must carry the unique maximal abelian subgroup of $J$ to that of $K$ and so $RAR^{-1} \subset Q$ implying the lemma.
\end{proof}

\begin{lemma}\label{diagsymp}
Suppose that $G \in \cc_3^n$ is a third level gate with the corresponding  conjugate tuple $\{(U_i,V_i)\}_{i \in [n]}$ of Clifford gates.  Suppose further that  $(S_i,v_i), (T_i,w_i) \in Sp(n,\Z_d) \ltimes \Z_d^{2n}$ are such that $\rho(S_i,v_i) = [U_i]$ and $\rho(T_i,w_i) = [V_i]$ as elements of $[\cc_2^n]$.  There exists a Clifford gate $C \in \cc_2^n$ such that $CGC^*$ has conjugate tuple $\{(\omega^{x_i} D_i X^{\hat{\alpha_i}},\omega^{y_i}E_i  X^{\hat{\beta_i}})\}_{i \in [n]}$ where $D_i, E_i \in \cd_3^n$; $x_i,y_i \in \Z_d$; $\hat{\alpha_i}, \hat{\beta_i} \in \Z_d^n$.
\end{lemma}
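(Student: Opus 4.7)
The plan is to conjugate $G$ by a Clifford that moves the symplectic parts of every element of the conjugate tuple into the abelian subgroup $Q \subset Sp(n,\Z_d)$, and then invoke the observation (stated just before the previous lemma) that Cliffords with symplectic part in $Q$ are diagonal up to phase.

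First I would extract the relevant group-theoretic data from $\{(U_i,V_i)\}_{i \in [n]}$. Since this is a conjugate tuple, its projective images pairwise commute in $[\cc_2^n]$: indeed, $U_i V_j = V_j U_i$ for $i \neq j$ in $\cu(d^n)$, while $U_i V_i = \omega V_i U_i$ descends to $[U_i][V_i] = [V_i][U_i]$. Transporting along the isomorphism $\rho^{-1}$, the set $B = \{S_i,T_i\}_{i \in [n]}$ consists of pairwise commuting elements of $Sp(n,\Z_d)$. Moreover $U_i^d = (GZ_iG^*)^d = \I$ and $V_i^d = \I$ exactly, so $(S_i,v_i)^d = (T_i,w_i)^d = (\I,0)$ in $Sp(n,\Z_d) \ltimes \Z_d^{2n}$, which forces $S_i^d = T_i^d = \I$.

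Next I would apply the preceding lemma to $B$ to obtain $R \in Sp(n,\Z_d)$ with $R\langle B\rangle R^{-1} \subset Q$, and choose $C \in \cc_2^n$ to be any Clifford lift of the projective class $\rho(R,0)$. Conjugating $(S_i,v_i)$ by $(R,0)$ in the affine symplectic group yields $(RS_iR^{-1},Rv_i)$ with $RS_iR^{-1} \in Q$; passing back through $\rho$ and lifting gives
$$CU_iC^* = \lambda_i \, \rho(RS_iR^{-1},0) \cdot W(Rv_i)$$
for some phase $\lambda_i$, where $\rho(RS_iR^{-1},0)$ is a diagonal Clifford by the characterisation of $Q$ recalled before the previous lemma, and $W(Rv_i)$ is a Pauli gate. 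Writing $W(Rv_i) = \omega^{c_i} Z^{\hat{p_i}} X^{\hat{\alpha_i}}$ and absorbing the diagonal Pauli factor $Z^{\hat{p_i}}$ into the diagonal prefix yields
$$CU_iC^* = \omega^{x_i} D_i X^{\hat{\alpha_i}},$$
with $D_i \in \cd_3^n$ (it is a product of diagonal Cliffords, hence diagonal and in $\cc_2^n \subset \cc_3^n$). Repeating the same argument with $(T_i,w_i)$ produces the $(E_i, \hat{\beta_i}, y_i)$ components. Since $CGC^*$ has conjugate tuple $\{(CU_iC^*, CV_iC^*)\}_{i \in [n]}$, the lemma follows.

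The main obstacle is the invocation of the preceding lemma: one must ensure that $B$ really fits its hypotheses, which is why the preliminary verification that $S_i,T_i$ commute and have order dividing $d$ is essential. The rest is a bookkeeping exercise in separating the symplectic and translation parts of the affine symplectic image and then rewriting the Pauli translation factor in the canonical form $\omega^{c} Z^{\hat{p}} X^{\hat{\alpha}}$, with all residual $Z$-powers absorbed into the diagonal factor.
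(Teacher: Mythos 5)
Your proposal is correct and follows essentially the same route as the paper: apply the preceding lemma to the commuting order-$d$ symplectic parts $\{S_i, T_i\}_{i \in [n]}$ and take $C$ to be (a lift of) $\rho(R,0)$, with the remaining work being the bookkeeping that separates each conjugated image into a diagonal Clifford factor times a Pauli and absorbs the residual $Z$-powers into the diagonal part. Your explicit verification that the $S_i, T_i$ commute and satisfy $S_i^d = T_i^d = \I$ is a detail the paper leaves implicit, so the extra care is welcome rather than a deviation.
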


\begin{proof}
Apply the preceeding lemma to $E = \{S_1, T_1, ..., S_n, T_n\}$ and take $C = \rho(R, 0)$ (with any phase).
\end{proof}

The situation where the conjugate tuples are diagonal Cliffords multiplied by a Pauli is easier to tackle as we can characterise them in a simpler combinatorial fashion.

\begin{lemma}\label{c3eqns}
Suppose $U = \omega^{x} D X^{\hat{\alpha}}$, $V = \omega^{y} E X^{\hat{\beta}}$, for $D, E \in \cd_3^2$; $x,y \in \Z_d$; and $\hat{\alpha}, \hat{\beta} \in \Z_d^2$.  Suppose further that $D = D[\omega^\phi] Z^{\hat{a}}$ and $E = E[\omega^\psi] Z^{\hat{b}}$ where $\phi(\hat{z}) = d_1 z_1^2 + d_2 z_2^2 + d_3 z_1 z_2$; $\psi(\hat{z}) = e_1 z_1^2 + e_2 z_2^2 + e_3 z_1 z_2$; and $\hat{a}, \hat{b} \in \Z_d^2$.

Then $UV = \omega^c VU$ for $c \in \Z_d$ if and only if, modulo $d$:
\begin{align*} 2 d_1 \beta_1 + d_3 \beta_2 -2 e_1 \alpha_1 - e_3 \alpha_2 &\equiv 0 \\
 d_3 \beta_1 + 2 d_2 \beta_2  -e_3 \alpha_1 - 2 e_2 \alpha_2  &\equiv 0 \\
 \psi(\hat{\alpha}) + \hat{a} \cdot \hat{\beta}  - \phi(\hat{\beta}) - \hat{b} \cdot \hat{\alpha}  &\equiv c.
\end{align*}
\end{lemma}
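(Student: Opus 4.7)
The plan is to compute $UV$ and $VU$ explicitly, put each into a canonical normal form $(\text{scalar}) \cdot D[\omega^{(\cdot)}] Z^{\hat{a}+\hat{b}} X^{\hat{\alpha}+\hat{\beta}}$, and then read off the condition $UV = \omega^c VU$ by comparing the two resulting diagonal operators pointwise. All manipulations are routine applications of Lemma \ref{xdiagcomm} and the basic relation $ZX = \omega XZ$ (so $X^{\hat{\alpha}} Z^{\hat{b}} = \omega^{-\hat{\alpha}\cdot\hat{b}} Z^{\hat{b}} X^{\hat{\alpha}}$), together with the observation that diagonal matrices commute among themselves and with the $Z^{\hat{c}}$.

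First, I would write $U = \omega^x D[\omega^\phi] Z^{\hat{a}} X^{\hat{\alpha}}$ and $V = \omega^y D[\omega^\psi] Z^{\hat{b}} X^{\hat{\beta}}$ and reduce $UV$. Moving $X^{\hat{\alpha}}$ past $D[\omega^\psi]$ using Lemma \ref{xdiagcomm} produces $D[\omega^{\psi(\cdot - \hat{\alpha})}]$, and moving it past $Z^{\hat{b}}$ contributes a scalar $\omega^{-\hat{\alpha}\cdot\hat{b}}$. Collecting the diagonal factors and the $Z$-factors yields
\begin{equation*}
UV \;=\; \omega^{x+y-\hat{\alpha}\cdot\hat{b}}\,D[\omega^{\phi(\hat{z})+\psi(\hat{z}-\hat{\alpha})}]\, Z^{\hat{a}+\hat{b}} X^{\hat{\alpha}+\hat{\beta}}.
\end{equation*}
An entirely parallel computation gives
\begin{equation*}
VU \;=\; \omega^{x+y-\hat{\beta}\cdot\hat{a}}\,D[\omega^{\psi(\hat{z})+\phi(\hat{z}-\hat{\beta})}]\, Z^{\hat{a}+\hat{b}} X^{\hat{\alpha}+\hat{\beta}}.
\end{equation*}
The $Z$ and $X$ factors are identical on both sides, so $UV = \omega^c VU$ is equivalent to the identity of functions on $\Z_d^2$:
\begin{equation*}
\phi(\hat{z})-\phi(\hat{z}-\hat{\beta})+\psi(\hat{z}-\hat{\alpha})-\psi(\hat{z}) \;\equiv\; c + \hat{\alpha}\cdot\hat{b} - \hat{\beta}\cdot\hat{a} \pmod{d}.
\end{equation*}

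The second step is to expand the left-hand side using the explicit quadratic forms. Since $\phi(\hat{z})-\phi(\hat{z}-\hat{\beta})$ and $\psi(\hat{z})-\psi(\hat{z}-\hat{\alpha})$ are each affine in $\hat{z}$, the LHS is an affine function of $\hat{z}$ and equality for all $\hat{z}$ forces the coefficients of $z_1$, $z_2$, and the constant to match modulo $d$. A direct binomial expansion of $(z_i - \beta_i)^2$ and $(z_1-\beta_1)(z_2-\beta_2)$ (and likewise for $\psi$) gives the coefficient of $z_1$ as $2 d_1 \beta_1 + d_3 \beta_2 - 2 e_1 \alpha_1 - e_3 \alpha_2$, the coefficient of $z_2$ as $d_3 \beta_1 + 2 d_2 \beta_2 - e_3 \alpha_1 - 2 e_2 \alpha_2$, and the constant term as $\psi(\hat{\alpha}) - \phi(\hat{\beta})$. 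Equating the two $z$-coefficients to zero yields the first two displayed congruences, and matching the constants (after transposing $\hat{a}\cdot\hat{\beta}$ and $\hat{b}\cdot\hat{\alpha}$ to the other side) yields the third.

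There is no real obstacle here; the only thing to watch is sign and direction conventions in the commutation rules, in particular making sure that $X^{\hat{\alpha}}Z^{\hat{b}}$ acquires $\omega^{-\hat{\alpha}\cdot\hat{b}}$ rather than $\omega^{+\hat{\alpha}\cdot\hat{b}}$, and that the shift $\hat{z} \mapsto \hat{z}-\hat{\alpha}$ produced by Lemma \ref{xdiagcomm} has the correct sign. Once those are fixed, the two equations of first degree in $\hat{z}$ come out of matching coefficients and the third falls out of the constant term, giving precisely the three stated congruences.
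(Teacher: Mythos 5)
Your proposal is correct and follows essentially the same route as the paper: the paper's (very terse) proof also reduces to commuting the $X$ factors to the right via Lemma \ref{xdiagcomm} (equivalently, translating the quadratics $\phi,\psi$) and matching the resulting affine-in-$\hat{z}$ phase discrepancy against $\omega^c$, which yields exactly your coefficient and constant-term congruences. Your normal-form comparison of $UV$ with $\omega^c VU$, including the sign conventions $X^{\hat{\alpha}}Z^{\hat{b}} = \omega^{-\hat{\alpha}\cdot\hat{b}}Z^{\hat{b}}X^{\hat{\alpha}}$ and the shift $\hat{z}\mapsto\hat{z}-\hat{\alpha}$, checks out.
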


\begin{proof}Simplify the expression $UVU^*V^* = \omega^c \, \I$ by commuting $X_1, X_2$ terms to the right by repeatedly applying Lemma \ref{xdiagcomm}.  The effect of translating the polynomials defining $U$ and $V$ is to multiply by a Pauli correction; the above equations ensure that this correction is simply the desired phase factor $\omega^c$.
\end{proof}

%\begin{thm}
%Every third level gate of two qudits (of any prime dimension) is semi-Clifford: $\semc_3^2 = \cc_3^2$.
%\end{thm}

These lemmas can be applied towards computationally verifying that each two-qutrit third level gate is semi-Clifford.  Computations would be intractable without them.  We then indicate how the proof might be analytically generalised to higher dimensions.

\begin{thm}
Every third level gate of two qutrits is semi-Clifford.
\end{thm}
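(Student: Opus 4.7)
The plan is to apply Lemmas~\ref{diagsymp} and~\ref{c3eqns} to reduce the problem to a tractable linear-algebraic condition, and then verify that condition by a finite computational search using the characterisation of Theorem~\ref{semicliff}.

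First, since the semi-Clifford property is preserved under pre- and post-multiplication by Cliffords, Lemma~\ref{diagsymp} lets us assume without loss of generality that $G$ has conjugate tuple $\{(U_1,V_1),(U_2,V_2)\}$ of the form $U_i = \omega^{x_i}D_i X^{\hat\alpha_i}$, $V_i = \omega^{y_i}E_i X^{\hat\beta_i}$ with $D_i,E_i \in \cd_3^2$. Writing $D_i = D[\omega^{\phi_i}]Z^{\hat a_i}$ and $E_i = D[\omega^{\psi_i}]Z^{\hat b_i}$ for homogeneous quadratics $\phi_i,\psi_i$, we associate to each the Hessian matrix $M_i,N_i \in \mathrm{Sym}_2(\Z_3)$. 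Repeated applications of Lemma~\ref{xdiagcomm} show that any product $U^{\hat p}V^{\hat q} = U_1^{p_1}U_2^{p_2}V_1^{q_1}V_2^{q_2}$ can be put into canonical form $\omega^c D[\omega^\Phi] Z^{\hat A} X^{\hat C}$ in which the homogeneous quadratic part of $\Phi$ is exactly $p_1 M_1 + p_2 M_2 + q_1 N_1 + q_2 N_2$; the translations picked up when commuting $X$'s past diagonal factors contribute only linear and constant corrections, which are absorbed into $Z^{\hat A}$ and $\omega^c$.

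Consequently, the product $U^{\hat p}V^{\hat q}$ is a Pauli if and only if $(\hat p,\hat q)$ lies in the kernel of the $\Z_3$-linear map $\pi:\Z_3^4 \to \mathrm{Sym}_2(\Z_3)$ sending $(\hat p,\hat q)\mapsto p_1 M_1 + p_2 M_2 + q_1 N_1 + q_2 N_2$. By Theorem~\ref{semicliff}, $G$ is semi-Clifford iff $\ker\pi$ contains a two-dimensional isotropic subspace of $\Z_3^4$. Since $\dim\mathrm{Sym}_2(\Z_3)=3$, we always have $\dim\ker\pi \geq 1$, but a priori the kernel could be only one-dimensional (when $M_1,M_2,N_1,N_2$ already span $\mathrm{Sym}_2(\Z_3)$) or two-dimensional but symplectic rather than Lagrangian; in either of these bad cases, no Lagrangian semibasis would exist.

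The main obstacle is to show that these bad configurations are ruled out by the coupling between the Hessians and the translation data imposed by the conjugate tuple conditions of Lemma~\ref{c3eqns}, namely $H_{U_i}\hat\beta_j = H_{V_j}\hat\alpha_i$ (together with analogous equations for $U_iU_j$ and $V_iV_j$ commutators, and the phase equations). My strategy is to enumerate the $27^4$ tuples $(M_1,M_2,N_1,N_2)\in \mathrm{Sym}_2(\Z_3)^4$ and, for each, first test feasibility of the linear translation equations of Lemma~\ref{c3eqns}; for each feasible configuration, verify that at least one of the $40 = (3+1)(9+1)$ Lagrangian subspaces of $\Z_3^4$ is contained in $\ker\pi$. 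The reductions of Lemmas~\ref{diagsymp} and~\ref{c3eqns} make this search finite and tractable on a computer. Once a valid Lagrangian semibasis $\{(\hat p_i,\hat q_i)\}_{i\in[2]}$ is produced, Lemma~\ref{customcliff} yields the Clifford corrections $C_1,C_2$ certifying $G\in\semc_3^2$. An analytic route to higher dimensions would proceed by deriving algebraic identities among the $M_i,N_j$ directly from the Lemma~\ref{c3eqns} constraints that force $\dim\ker\pi$ to be at least two and its restriction of the symplectic form to degenerate.
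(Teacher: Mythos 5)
Your proposal follows essentially the same route as the paper: reduce via Lemma~\ref{diagsymp} to conjugate tuples of the form (diagonal Clifford)$\times$(Pauli), observe that $U^{\hat p}V^{\hat q}$ is Pauli precisely when the homogeneous quadratic parts cancel, and then invoke Theorem~\ref{semicliff} together with a finite, Lemma~\ref{c3eqns}-constrained computer search to confirm that the kernel of the quadratic-coefficient map always contains a Lagrangian semibasis, finishing with Lemma~\ref{customcliff}. The only differences are organisational (you enumerate Hessian quadruples and test feasibility of the translation data, whereas the paper enumerates the $4199040$ conjugate tuples directly), and, as in the paper, the argument is complete only once the exhaustive verification is actually carried out.
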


\begin{proof}
Let $G$ be a two-qutrit gate of the third level of the Clifford hierarchy.  By Lemma \ref{diagsymp}, we can assume that without loss of generality that $G$ has a conjugate tuple of the form $\{(\omega^{x_1} D_1 X^{\hat{\alpha_1}},\omega^{y_1}E_1  X^{\hat{\beta_1}}),(\omega^{x_2} D_2 X^{\hat{\alpha_2}},\omega^{y_2}E_2  X^{\hat{\beta_2}})\}$ where $D_i, E_i \in \cd_3^2$; $x_i,y_i \in \Z_d$; and $\hat{\alpha_i}, \hat{\beta_i} \in \Z_d^2$.  We can ignore the discrete phases as doing so results in another conjugate tuple whose corresponding gate is, by Theorem \ref{semicliff}, semi-Clifford if and only if the original one is.  

Thus, $U = GZ_1G^*, V = GX_1 G^*, S = G Z_2 G^*, T = G X_2 G^*$ are characterised by four septuples of elements of $\Z_d$ that satisfy the eighteen equations of Lemma \ref{c3eqns} describing the commutation relations $UV = \omega VU$, $ST = \omega TS$, $US = SU$, $UT = TU$, $VS = SV$, $VT = TV$.  One can exhaustively compute all such quadruples of septuples by first computing the conjugate pairs and then by finding the pairs of these which give conjugate tuples; we find there to be 4199040 such conjugate tuples.

One can then apply Theorem \ref{semicliff} to verify that each conjugate tuple arises from a semi-Clifford gate.  It is sufficient to verify that the kernel of the matrix $$\begin{pmatrix}
u_1 & v_1 & s_1 & t_1 \\
u_2 & v_2 & s_2 & t_2 \\
u_3 & v_3 & s_3 & t_3 \end{pmatrix}$$ contains a Lagrangian semibasis, where $u_i, v_i, s_i, t_i$ are the coefficients of the homogeneous quadratic polynomial of $U,V,S,T$ respectively.  This is because for any Pauli $P_1$ and homogeneous quadratic $\phi$, there is a Pauli $P_2$ such that $P_1 D[\omega ^\phi] = D[\omega ^\phi] P_2$.
\end{proof}

One path to generalising this result to higher dimensions would be to analytically derive from the eighteen equations of Lemma \ref{c3eqns} characterising conjugate tuples a Lagrangian semibasis in the kernel of the above matrix.

\begin{conj}
Every $k$-th level gate of two qudits (of any prime dimension) is semi-Clifford: $\semc_k^n = \cc_k^n$ for $n = 1,2$.
\end{conj}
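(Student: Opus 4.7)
The plan is to proceed by induction on $k$, with the base cases $k \in \{1,2\}$ trivial (Paulis and Cliffords are manifestly semi-Clifford) and $k=3$ supplied by Theorem \ref{3rdlvlisSC} together with the preceding two-qutrit theorem. Throughout, I would invoke the criterion of Theorem \ref{semicliff}: to prove $G \in \cc_{k+1}^n$ is semi-Clifford, it suffices to exhibit a Lagrangian semibasis $\{(\hat{p_i}, \hat{q_i})\}_{i \in [n]} \subset \Z_d^{2n}$ such that each $U^{\hat{p_i}} V^{\hat{q_i}}$ is a Pauli, where $U_j = GZ_jG^*$ and $V_j = GX_jG^*$ lie in $\cc_k^n$ and form a conjugate tuple.

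For the single-qudit inductive step, the strategy is to seek a higher-level analogue of the Weil representation that reduced the $k=3$ case to a simple divisibility argument. Invoking the inductive hypothesis together with Theorem \ref{cgk}, I would Clifford-conjugate $G$ so that each of $U, V$ decomposes as a diagonal gate in $\cd_k^1$ times a Pauli times a phase, with the diagonal parts parametrised by rank-$k$ polynomials in $\crr_k$. The conjugate-pair relation $UV = \omega VU$ then yields polynomial identities among these parameters, and the task becomes showing that the map $(p,q) \mapsto U^p V^q$ modulo Paulis factors through a finite group whose order is not divisible by $d^2$, thereby forcing a nontrivial kernel element. This is exactly the role played at level three by the divisibility $|Sp(1,\Z_d)| = d(d^2-1) \not\equiv 0 \pmod{d^2}$.

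For the two-qudit step I would mirror the proof of the preceding theorem. First, generalise Lemma \ref{diagsymp} so that after a Clifford change of basis each of $U_1, V_1, U_2, V_2$ acquires the normal form ``diagonal gate times Pauli times phase'' with the diagonal component in $\cd_k^2$; then write the analogue of Lemma \ref{c3eqns} for rank-$k$ polynomials in two variables, obtaining a system of identities across the coefficients at each degree. Semi-Cliffordness then reduces, via Theorem \ref{semicliff}, to exhibiting a Lagrangian 2-plane in the kernel of a natural matrix built from these coefficients. An analytic extraction of this kernel --- replacing the $k=3$ exhaustive check over $4199040$ conjugate tuples --- would proceed by exploiting the symmetry of the defining equations across the levels of the rank filtration of $\crr_k$.

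The main obstacle is extending Lemma \ref{diagsymp} beyond the third level. Its current proof exploits that $\cc_2^n$ is a group projectively isomorphic to $Sp(n,\Z_d) \ltimes \Z_d^{2n}$, together with Barry's theorem on the unique maximal abelian subgroup of a Sylow $d$-subgroup of the symplectic group. For $k > 3$, $\cc_{k-1}^n$ is not a group, so the required simultaneous-normalisation lemma must instead be built directly from the recursive structure of diagonal Clifford hierarchy gates afforded by Theorem \ref{cgk}. Identifying the correct ``level-$k$ Weil representation'' of an extended Heisenberg-like object --- one whose target admits an analogous order-divisibility obstruction --- is where the substantive difficulty lies; the numerical evidence and Algorithm 2 suggest such a structure exists, but pinning down its combinatorial shape appears to be the genuinely new input needed beyond what the present paper supplies.
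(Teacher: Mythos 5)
This statement is one of the paper's \emph{conjectures}, not a theorem: the paper proves only the $k=3$ cases (all primes for $n=1$; only $d=3$ for $n=2$, via an exhaustive computation over conjugate tuples) and otherwise offers numerical evidence. So there is no paper proof to match, and your proposal, read as a proof, has a genuine gap --- one you yourself identify. The entire single-qudit argument at level three hinges on the fact that $U,V$ are \emph{Clifford}, hence project through the Weil representation onto commuting elements of $Sp(1,\Z_d)$, a finite group whose order $d(d^2-1)$ is not divisible by $d^2$; that is what forces a nontrivial kernel and hence a Pauli product $U^pV^q$. For $k>3$ the operators $U,V$ lie in $\cc_k^n$, which is not a group and carries no known finite quotient playing the role of $Sp$, so the ``level-$k$ Weil representation'' your inductive step needs is precisely the missing object, not something the induction hypothesis or Theorem \ref{cgk} provides. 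Likewise, your two-qudit step requires a version of Lemma \ref{diagsymp} whose proof uses Barry's theorem on Sylow $d$-subgroups of $Sp(n,\Z_d)$ and again the group structure of $\cc_2^n$; no analogue is constructed, and the ``analytic extraction'' of a Lagrangian $2$-plane replacing the finite check of $4199040$ tuples is asserted rather than performed.

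A smaller but real inaccuracy: your claimed base case covers $n=2$, $k=3$ only for qutrits. For general prime $d$ the two-qudit third-level case is itself open (the paper states it as a separate conjecture), so even the starting point of your induction for $n=2$ is unproven beyond $d=3$. In short, the proposal is a reasonable research programme aligned with the paper's own suggested directions (e.g.\ deriving a Lagrangian semibasis analytically from the relations of Lemma \ref{c3eqns}), but it does not constitute a proof, and the paper does not contain one either.
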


%Let $n = 1$ or $2$, $U \in \cc_3$.  The Cliffords $U_i, V_i$ commute up to phase: $U_i V_i = \omega V_i U_i$.
%
%Let $(S_i,v_i), (T_i,w_i) \in Sp(n, \Z_d) \ltimes \Z^{2n}$ be such that $$\mu(S_i,v_i) = [U_i] \text{ and } \mu(T_i,w_i) = [V_i].$$
%The set $\{S_i,T_i\} \subset  Sp(n, \Z_d) $ is commutative by our earlier lemma.
%
%
%
%Suppose for contradiction that $U \notin \semc_3$.  By our Lemma, the homomorphism $\Z_d^{2n} \to Sp(n, \Z_d)$ defined by $(q_i,p_i) \mapsto \prod_i \, S_i^{q_i} \, T_i ^{p_i}$ is injective.
%
%Thus, there is a copy of $\Z_d^{2n}$ inside $Sp(n, \Z_d)$.  
%
%For the $n=1$ case, this is a contraction: by a simple order argument, the largest $p$-subgroup of $Sp(1, \Z_d)$  has order $p$ whereas $\Z_d^2$ has order $p^2$.
%
%For the $n=2$ case: the largest abelian subgroup of a $p$-subgroup of $Sp(n, \Z_d)$ has order $p^{\frac{n(n+1)}{2}}$ when $n \geq 2$ (Barry, 1978).

\section{Conclusions and open problems}

Understanding the structure of the Clifford hierarchy and the semi-Clifford gates, i.e. those admitting efficient implementation via the one-dit gate teleportation protocol described above, in the qudit case is essential for bolstering the viability of qudit fault-tolerant quantum computation.

We have developed a perspective on studying the qudit Clifford hierarchy via the discrete Stone-von Neumann theorem.  This focus on studying Clifford gates via their actions by conjugation on basic Pauli gates, first employed by Beigi-Shor in the qubit, third level case, is fruitfully extended to the widest possible generality.

Technically, this perspective enables a simple proof of Cui-Gottesman-Krishna's classification of diagonal Clifford hierarchy gates (in the single-qudit case) which raises the question: might it more easily admit generalisation to a classification of \emph{all} Clifford hierarchy gates?  It further enables a novel characterisation of semi-Clifford gates that serves as the basis for proving that all third level gates of one-qudit and two-qutrits are semi-Clifford.

These technical developments lead to simple algorithms for recursively enumerating all members of the Clifford hierarchy that works for any $(d,n,k)$ and for recognising and diagonalising semi-Clifford gates.

We have employed these algorithms to find numerical evidence that support a number of conjectures.  Establishing these conjectures promise to stimulate development of the stabiliser formalism and of the Clifford hierarchy as well as further bolster the viability of qudit fault-tolerant quantum computation.

\addtocounter{conj}{-3}
\begin{conj}
Gates of the $k+1$-th level of the Clifford hierarchy, up to phase, are in bijective correspondence with conjugate tuples of $k$-th level gates.
\end{conj}

\begin{conj}
Every $k$-th level gate of one qudit (of any prime dimension) is semi-Clifford: $\semc_k^1 = \cc_k^1$.
\end{conj}

\begin{conj}
Every third level gate of two qudits (of any prime dimension) is semi-Clifford: $\semc_3^2 = \cc_3^2$.
\end{conj}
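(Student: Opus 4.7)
My plan is to follow the proof of the preceding two-qutrit theorem, replacing its final exhaustive computation by an analytic argument valid for every prime $d$. Given $G \in \cc_3^2$, first apply Lemma \ref{diagsymp} to conjugate $G$ by a Clifford so that its conjugate tuple $\{(U_1, V_1), (U_2, V_2)\}$ consists of gates of the form $A_i = \omega^{c_i} D[\omega^{\phi_i}] Z^{\hat{a}_i} X^{\hat{\alpha}_i}$, where $\phi_i$ is a homogeneous quadratic form on $\Z_d^2$ with coefficient vector $u_i \in \Z_d^3$. This conjugation preserves semi-Cliffordness. Assembling the $u_i$ into the columns of a $3 \times 4$ matrix $M$, and invoking Theorem \ref{semicliff} together with the fact that any Pauli commutes with a diagonal quadratic-phase Clifford up to another Pauli (so the quadratic part of $A_1^{r_1} A_2^{r_2} A_3^{r_3} A_4^{r_4}$ is precisely $\sum_i r_i u_i$), the semi-Cliffordness of $G$ reduces to the statement that $\ker M \subset \Z_d^4$ contains a Lagrangian semibasis for the standard symplectic form on Pauli indices.

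Next I perform case analysis on $\mathrm{rank}(M)$. If $\mathrm{rank}(M) \leq 1$, then $\dim \ker M \geq 3$ is coisotropic; its symplectic perp is an isotropic subspace of dimension $\leq 1$ which extends to a Lagrangian 2-plane inside $\ker M$. If $\mathrm{rank}(M) = 2$, then $\ker M$ is a 2-plane and I must show it is Lagrangian, since in characteristic $\neq 2$ a symplectic 2-plane contains no two independent isotropic vectors. If $\mathrm{rank}(M) = 3$, the kernel is 1-dimensional and no Lagrangian semibasis exists, so this case must be ruled out entirely by the conjugate-tuple constraints.

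Those constraints are the eighteen equations of Lemma \ref{c3eqns} encoding the six pairwise commutation relations of the tuple: twelve are linear in the eight translation parameters $\hat{\alpha}_i, \hat{\beta}_i$ with coefficients drawn from $M$, and six are phase-matching equations involving $\hat{a}_i, \hat{b}_i$. The translation-linear block forms a $12 \times 8$ system whose coefficient matrix depends polynomially on the entries of $M$; consistency with the six phase equations places non-trivial rank constraints on $M$. The analytic core of the proof is to translate these rank constraints into the required isotropy and dimension statements about $\ker M$.

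The main obstacle is carrying out this translation cleanly. I expect the $\mathrm{rank}(M) = 3$ case to fall to a pigeonhole or counting argument: the translation system becomes over-determined relative to what the phase equations demand, producing a contradiction independent of the choices of $\hat{a}_i, \hat{b}_i$. The $\mathrm{rank}(M) = 2$ case looks harder and may require parameterising such $M$ by a basis of its column space, expressing the remaining two columns in that basis, and then directly verifying that the resulting 2-dimensional kernel vanishes under the symplectic form on $\Z_d^4$. The $d = 3$ enumeration over some $4 \cdot 10^6$ tuples strongly supports the conjecture, but a symplectic-geometric or invariant-theoretic reformulation of the eighteen equations would likely be needed for a truly conceptual proof; lacking one, a systematic case split by the vanishing pattern of the entries of $M$ modulo $d$ seems the most practical route.
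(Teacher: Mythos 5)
Your set-up is sound and, in fact, coincides with what the paper does and with the generalisation route it explicitly proposes: note first that the paper does \emph{not} prove this statement --- it is left as a conjecture, supported by the two-qutrit theorem, which is obtained by exactly your reduction (Lemma \ref{diagsymp} to put the conjugate tuple in the form quadratic-phase diagonal Clifford times Pauli, Lemma \ref{c3eqns} for the commutation constraints, Theorem \ref{semicliff} for the semi-Clifford criterion) followed by an \emph{exhaustive enumeration} of the $4199040$ admissible tuples for $d=3$. Your reformulation is correct for odd $d$: since monomials of degree at most $d-1$ are linearly independent as functions on $\Z_d^2$, a word in the tuple is a Pauli precisely when its exponent vector lies in $\ker M$, so semi-Cliffordness of the reduced gate is equivalent to $\ker M$ containing a Lagrangian semibasis; the $\mathrm{rank}(M)\le 1$ case is indeed automatic, and a $2$-dimensional kernel suffices iff it is totally isotropic.

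The genuine gap is that the two cases carrying all of the content --- showing that the eighteen constraints of Lemma \ref{c3eqns} exclude $\mathrm{rank}(M)=3$, and force a $2$-dimensional kernel to be isotropic --- are not carried out: you state expectations (``I expect a pigeonhole or counting argument'', ``may require parameterising'') rather than arguments, and these are precisely the open analytic core that the paper itself flags as the path to higher dimensions. As it stands the proposal is a research plan, not a proof of the conjecture. Two smaller points: the conjecture covers \emph{every} prime $d$, but your machinery (homogeneous quadratics over $\Z_d$, Lemma \ref{diagsymp}, the Weil representation) is only available for odd $d$, so the $d=2$ case must be handled separately, e.g.\ by citing Zeng--Chen--Chuang's two-qubit result; and in the rank-$2$ case the accurate statement is that a symplectically nondegenerate $2$-plane contains no pair of linearly independent, mutually symplectically orthogonal vectors (every vector is isotropic for an alternating form), although the dichotomy you intend --- the kernel plane is either Lagrangian or symplectic --- is the right one.
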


In analogy with the result of Zeng-Chen-Chuang that all two-qubit gates of any level are semi-Clifford, one might even conjecture that every $k$-th level gate of two qudits (of any prime dimension) is semi-Clifford: $\semc_k^2 = \cc_k^2$.

While these conjectures may not hold for all $(d,n,k)$, they do hold for some such triples and almost certainly for some beyond what has already been established rigorously.  They may therefore be reformulated as questions: for which  $(d,n,k)$ do they hold?

A future direction of research is to give a complete classification of the Clifford hierarchy.  One might begin by classifying the third level gates which, by Theorem \ref{C3fromtuples}, correspond to conjugate tuples of Clifford gates.  Using the explicit metaplectic representation of Neuhauser \cite{neuhauser2002explicit} could be useful in characterising these tuples.  %{\red Every $\cc_k$ is a $d^n$ tuple of phase points.  Can we characterise which ones give rise to conjugate pairs?  Deeper understanding of Weil representation and choice of phases?  \cite{zeng2008semi,zhou2000methodology,beigi2009c3,gottesman1999,
%rosenberg2004,qtfm,cui2017diagonal,gross2006hudson,neuhauser2002explicit}}

\begin{acknowledgments}
We wish to thank Mark Howard and Michael J.J. Barry for helpful conversations and Richard Jozsa for encouragement and support during this project.

We acknowledge support from the QuantERA ERA-NET Cofund in Quantum Technologies implemented within the European Union's Horizon 2020 Programme (QuantAlgo project), and administered through the EPSRC grant no. EP/R043957/1: Quantum algorithms and applications.
\end{acknowledgments}

\bibliographystyle{abbrv}

{\footnotesize
\bibliography{cliff}}

\end{document}